\newtheorem{theorem}{Theorem}
\newtheorem{lemma}{Lemma}
\begin{document}
\preprint{APS/123-QED}
\title{Simultaneous Certification of Entangled States and Measurements in Bounded Dimensional Semi-Quantum Games}
\date{\today}
\author{Xingjian Zhang}
\affiliation{Center for Quantum Information, Institute for Interdisciplinary Information Sciences, Tsinghua University, Beijing, 100084 China}
\author{Qi Zhao}
\email{zhaoqithu10@gmail.com}
\affiliation{Center for Quantum Information, Institute for Interdisciplinary Information Sciences, Tsinghua University, Beijing, 100084 China}
\affiliation{Shanghai Branch, National Laboratory for Physical Sciences at Microscale and Department of Modern Physics, University of Science and Technology of China, Shanghai 201315, China}
\affiliation{CAS Center for Excellence and Synergetic Innovation Center in Quantum Information and Quantum Physics, University of Science and Technology of China, Shanghai 201315, China}

\begin{abstract}
  Certification of quantum systems and operations is a central task in quantum information processing. Most current schemes rely on a tomography with fully characterised devices, while this may not be met in real experiments. Device characterisations can be removed with device-independent tests, it is technically challenging at the moment, though. In this letter, we investigate the problem of certifying entangled states and measurements via semi-quantum games, a type of non-local quantum games with well characterised quantum inputs, balancing practicality and device-independence. We first design a specific bounded-dimensional measurement-device-independent game, with which we simultaneously certify any pure entangled state and Bell state measurement operators. Afterwards via a duality treatment of state and measurement, we interpret the dual form of this game as a source-independent bounded-dimensional entanglement swapping protocol and show the whole process, including any entangled projector and Bell states, can be certified with this protocol. In particular, our results do not require a complete Bell state measurement, which is beneficial for experiments and practical use.
\end{abstract}

\maketitle

%\section{Introduction}\label{intro}
%\begin{figure}[!hbt]
%\centering \resizebox{8cm}{!}{\includegraphics{BellTests.eps}}
%\caption{Standard Bell test (DI quantum game) and semi-quantum game.} \label{Fig:BellTests}
%\end{figure}
\emph{Introduction.---}
With the rapid development of quantum technologies in state preparation and dynamical evolution, quantum devices are expected to outperform their classical counterparts in the future. However, unexpected and uncharacteristic noise hampers the functioning of quantum devices. Thus certification of states and processes which have been actually implemented in the quantum devices is a central task in this field.
%Novel phenomenon without a counterpart in the classical world, and information processing tasks allowed only by quantum laws, all require the certification of certain quantum resources.
Nonetheless, as a classical being, an observer can only take advantage of classical statistics together with a physical model he/she believes to characterise the system and its dynamical evolution. One way for certification is to employ a full tomography on the system (state tomography)~\cite{raymer1994complex,leonhardt1996discrete,leonhardt1997measuring} or on the process (process tomography)~\cite{poyatos1997complete,chuang1997prescription,d2001quantum}. With an information-complete set of measurements (input states), one can reconstruct all the elements in the quantum state operator (quantum process).
%For the benchmark of quantum processes, a complete set of input states are generally needed as well.
However, such approaches require a full knowledge on the devices. %everything taking place.
Unavoidable noise from the real environment can easily nullify the results. The states and measurements used in tomography need to be fully characterised and trusted, while sometimes they might be even more difficult to calibrate than the system/process investigated.

Luckily, quantum physics allows us to break away from the dilemma by Bell nonlocality~\cite{Bell,clauser1969proposed}. Violation of a Bell inequality indicates entanglement in a nonlocal system and incompatibility of measurements. In fact, as first explicitly pointed out by Mayers and Yao ~\cite{mayers2003self}, the state and measurements can be uniquely determined up
to local isometries from certain classical correlations, or, the system ``self-tests'' itself~\cite{mayers2003self} without any charactisation of the devices, but
%In contrast to tomography, one needs not to have a full characterisation of the things undertaken to certify the quantum system and the measurement processes, but
only relying on the validity of quantum physics. In this sense we say the certification is device-independent.
So far, numerous remarkable results have been derived in the subject of state self-testing~\cite{yang2013robust,coladangelo2017all,mckague2011self,pal2014device,wu2014robust,vsupic2018self}, and physical processes or measurements certification~\cite{mckague2010generalized,bamps2015sum,vsupic2016self,kaniewski2017self,bancal2018noise,kaniewski2018maximal,sekatski2018certifying}. To make self-testing meet a real situation, robust self-testing, allowing the noises and imperfections to some extent,  has been investigated, aiming to give a lower bound on the fidelity of the physical system from a reference system (in the sense of local isometries) based on the observed statistics. Much progress has been made in this field, too~\cite{mckague2012robust,bancal2015physical,kaniewski2016analytic,mckague2012robust,yang2014robust,wu2014robust} .

%Numerous results have been derived in the field of self-testing. So far, people have shown that all pure bipartite entangled states and certain multipartite entangled states can be self-tested Most of these self-testing scenarios can also be used to certify incompatible local measurements and much progress has been made on qubit observables. Besides, Bell-state measurements~\cite{}, certain mutually unbiased bases~\cite{} and quantum channels~\cite{} can also be certified in a device-independent manner.

Realising loophole-free device-independent tests are highly challenging in practice, mainly suffering from detection and locality loopholes.  Up till now, only a few device-independent experiments succeed in closing the detection loophole and the locality loophole simultaneously~\cite{giustina2015significant,hensen2015loophole,shalm2015strong,rosenfeld2017event,liu2018device}. However, in these demonstrations, either only a rather small violation of Bell inequality is obtained, which is far from certifying a Bell state according to the current robust self-testing results~\cite{kaniewski2016analytic}, or the repetition rate is too low for any practical applications.

%For the simplest CHSH test,
%it requires at least a detection efficiency of $2/3$ for the rejection of a local hidden variable model~\cite{eberhard1993background}, while a critical result is far from certifying a Bell state according to the current robust self-testing results~\cite{kaniewski2016analytic}.
%Robust self-testing can be seen as a method to tackle this problem, which aims In addition, in contrast to tomography, the device-independent feature naturally prohibits one from certifying some types of systems and processes like mixed states.

A compromising approach to balance device-independence and practicality is to apply a semi-device-independent scenario. One of the most prominent scenarios is the semi-quantum game proposed by Buscemi~\cite{buscemi2012all}. Semi-quantum games are similar to Bell tests, except for that general local quantum inputs are allowed. It has been proved that all entangled states exhibit non-locality in these games, bridging the gap between the concept of entanglement and non-locality.
After Buscemi's seminal results, focusing on the prepared entangled states, there are following works presenting approaches to qualitatively witness entanglement~\cite{branciard2013measurement,xu2014implementation,verbanis2016resource,zhao2016efficient,shahandeh2017measurement} and quantitatively estimate the amount of entanglement in the system with semi-quantum games ~\cite{regev2015quantum,vsupic2017measurement}. However, little attention has been received for channels or measurements in these games, with one exceptional work focusing on the quantum channel with quantum memories ~\cite{rosset2018resource}.
Moreover, it is left open whether we can ``uniquely'' certify certain systems and operations simultaneously in a manner similar to self-testing in device-independent tests.

%there are designs of measurement-device-independent entanglement witness~\cite{branciard2013measurement} and entanglement quantification~\cite{vsupic2017measurement} based on semi-quantum games. Several experimental implementations have been carried out showing the feasibility of these designs. Various semi-quantum games have been investigated with numerical approaches~\cite{regev2015quantum}. There are also results showing the connection of semi-quantum games with resource theories of entanglement and quantum memories~\cite{shahandeh2017measurement,rosset2018resource}. Yet so far it leaves open whether we can use semi-quantum games to certify certain systems and processes.

In this letter, we consider the certification of entangled states and measurements in semi-quantum games with a given dimensional Hilbert space. %We focus on the case where each local system is a qubit.
We design a new type of semi-quantum game, showing the plausibility of simultaneous certification of any pure bipartite entangled state in $\mathcal{H}^2\otimes\mathcal{H}^2$ and Bell state measurement operators. In particular, our design does not rely on a complete Bell state measurement but can be naturally generalised for such a case.
Moreover, we transform the setting into the manner of a source-independent entanglement swapping protocol via a duality treatment of state and measurement operator. In its ``dual'' form of semi-quantum game, any entangled projector acting on the bounded dimensional space and Bell states can be certify simultaneously.
%As an application of our results, we propose a "dual" task source-independent entanglement swapping protocol, where the process can be certified with the semi-quantum game.
%show that our result can be used to interpret the entanglement swapping in a source-independent manner.

\emph{Preliminary.---}We first briefly review the concept of semi-quantum games in~\cite{buscemi2012all} and results in~\cite{branciard2013measurement}. Consider a nonlocal game with two players, Alice and Bob, as shown in Fig.~\ref{Fig:SemiQ}. In each round of the game, a referee gives them quantum states $\psi_x^{A_0}\in\mathcal{D}(\mathcal{H}_{A_0}), \psi_y^{B_0}\in\mathcal{D}(\mathcal{H}_{B_0})$ individually. %according to the probability distributions $\{p_x\},\{q_y\}$.
We denote the input systems as $\mathcal{H}_{A_0},\mathcal{H}_{B_0}$ for Alice and Bob, respectively. Then they are asked to give classical outputs $a,b$, and a score $\beta_{a,b}^{x,y}$  is obtained based on the combination of $\{\psi_x^{A_0},\psi_y^{B_0},a,b\}$.
%We can absorb $p_x,q_y$ into $\tilde{\beta}$ as $\beta_{a,b}^{x,y}=p_x q_y\tilde{\beta}_{a,b}^{x,y}$ and represent a semi-quantum game by the inputs and score setting in this game, that is, $\{\psi_x^{A_0},\psi_y^{B_0},\beta_{a,b}^{x,y}\}$.
%The average score Alice and Bob can gain is
%\begin{equation}
%    I = \sum_{x,y,a,b}\beta_{a,b}^{x,y} P(a,b|\psi_x^{A_0},\psi_y^{B_0}).
%\end{equation}
In this game, generally, Alice and Bob can first share a bipartite state $\rho^{AB}\in\mathcal{D}(\mathcal{H}_A\otimes\mathcal{H}_B)$, and perform joint positive operator-valued measure measurements (POVMs) on the party they each hold and the received quantum input state, individually.
%Without loss of generality, we can assume the state that Alice and Bob share to be pure, since we do not restrict the dimensions of $\mathcal{H}_A, \mathcal{H}_B$ and Alice and Bob can always purify their state by dilating the systems. We represent the state as $\ket{\Psi}\in\mathcal{H}_A\otimes\mathcal{H}_B$.
We denote the POVM elements as $M_a^{A_0A}$ acting on $\mathcal{H}_{A_0}\otimes\mathcal{H}_A$ yielding the result $a$ on Alice's side and $M_b^{BB_0}$ acting on $\mathcal{H}_B\otimes\mathcal{H}_{B_0}$ yielding the result $b$ on Bob's side. The average score is
\begin{equation}
\begin{aligned}
    S &= \sum_{x,y,a,b}\beta_{a,b}^{x,y}P(a,b|\psi_x^{A_0},\psi_y^{B_0}) \\
      &= \sum_{x,y,a,b}\beta_{a,b}^{x,y}\mathrm{Tr}[(\psi_x^{A_0}\otimes\rho^{AB}\otimes\psi_y^{B_0})(M_a^{A_0A} \otimes M_b^{BB_0})].
\end{aligned}
\end{equation}
\begin{figure}[!hbt]
\centering \resizebox{7cm}{!}{\includegraphics{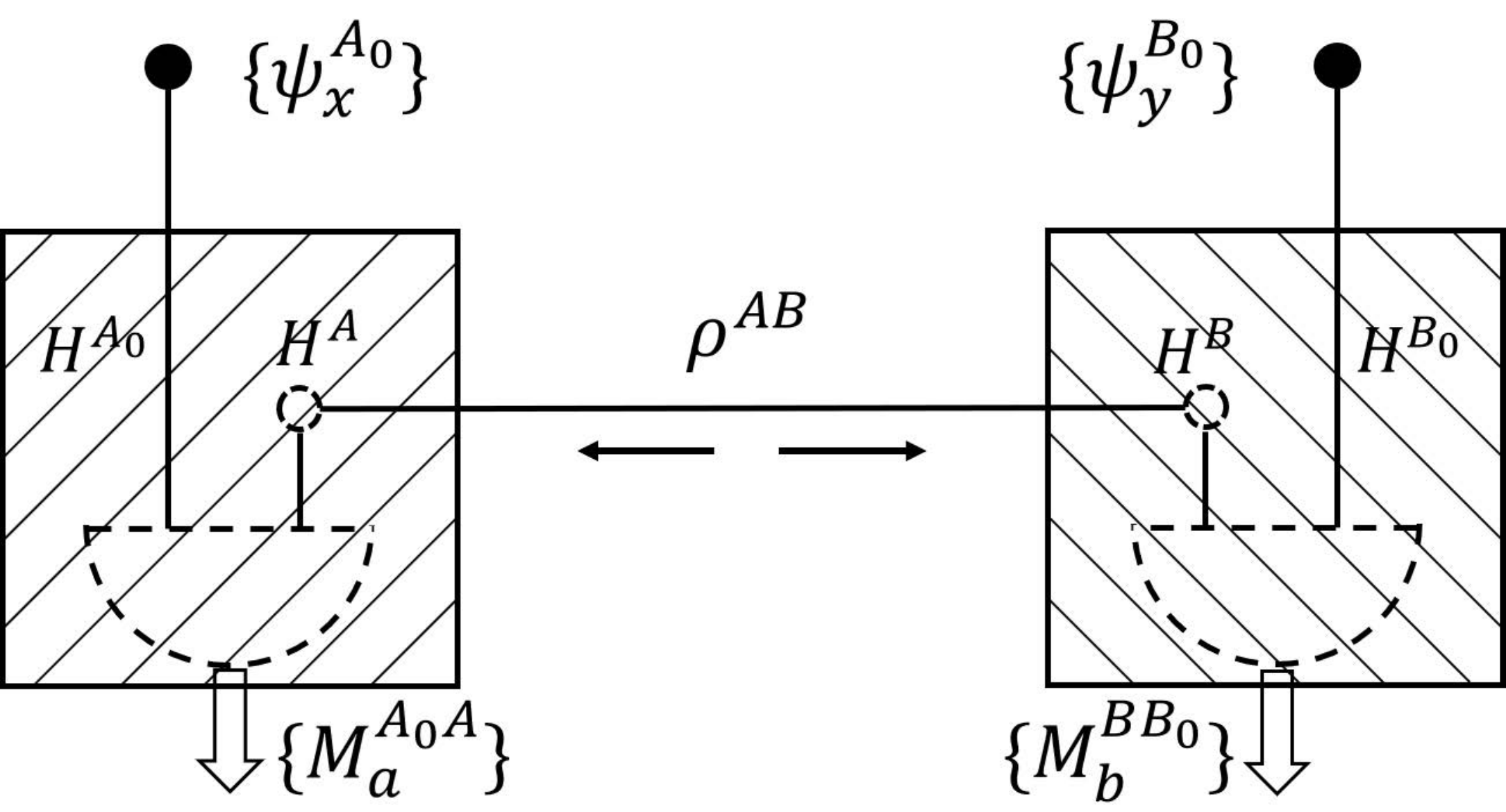}}
\caption{The semi-quantum game. The systems are denoted by their corresponding Hilbert spaces. In this game, from the referee's perspective, well-characteristic states $\{\psi_x^{A_0}\},\{\psi_y^{B_0}\}$ are sent to Alice and Bob, respectively, where each of the two players measures the input quantum state received and her/his own party of $\rho^{AB}$ jointly. The joint measurements are expressed as POVMs $\{M_{a}^{A_0A}\},\{M_{b}^{B_0A}\}$.} \label{Fig:SemiQ}
\end{figure}

%\zxj{Remark: the use of a pure state $\ket{\Psi}$ needs to be made in a more rigourous manner. Strictly speaking, if $\ket{\Psi}$ is the purification of $\rho$ by introducing another system $C$, a rigourous representation should be
%\begin{equation}
%    P(a,b|\psi_x^{A_0},\psi_y^{B_0}) = \Tr[(\psi_x^{A_0}\otimes\ket{\Psi}\bra{\Psi}\otimes\psi_y^{B_0})(M_a^{A_0A} \otimes M_b^{BB_0}\otimes I^C)].
%\end{equation}}

%The ambiguity between nonorthogonal states makes semi-quantum games different from Bell tests.
Using the technique of partial POVM, we can treat inputting quantum states as a tomography process of the effective POVM elements
\begin{equation}\label{partial POVM}
    \tilde{M}_{ab}^{A_0B_0}=\mathrm{Tr}_{AB}\left[\left(I^{A_0}\otimes\rho^{AB}\otimes I^{B_0}\right)\left(M_a^{A_0A}\otimes M_b^{BB_0}\right)\right].
\end{equation}
It has been proved in~\cite{branciard2013measurement} that a separable $\rho^{AB}$ will result in separable effective POVM elements $\tilde{M}_{ab}^{A_0B_0}$ whatever the measurements. Ideally, an entangled $\rho^{AB}$ together with the projections on the maximally entangled state $M_a^{A_0A} = M_b^{BB_0} = \ketbra{\Phi^{+}}$ gives rise to the effective POVM element
$\tilde{M}_{ab}^{A_0B_0} = (\rho^{AB})^T/d_A d_B,$ where $d_A,d_B$ are dimensions of $\mathcal{H}_A,\mathcal{H}_B$ and $\ket{\Phi^{+}} = \dfrac{1}{\sqrt{d}}\sum_{i=1}^{d}\ket{ii}$. One can focus on one specific outcome, say $(a,b)=(0,0)$, by letting $\beta_{a,b}^{x,y}=\delta_{a,0}\cdot\delta_{b,0}\cdot\beta^{x,y}$ in a semi-quantum game.

%Since a full transpose maintains entanglement, witnessing entanglement in $\tilde{M}_{(a,b)=(0,0)}^{A_0B_0}$ indicates entanglement in $\rho^{AB}$.
Consider any conventional entanglement witness $W$, $\Tr[W\rho^{AB}]<0 $  for a specific entangled state $\rho^{AB}$ while $\Tr[W\sigma]\ge 0$ for any separable operator $\sigma$.
%We can construct an entanglement witness which guarantees that $\Tr[W^T\rho^{AB}]=\Tr[W(\rho^{AB})^T]>0$ for a specific entangled state $\rho^{AB}$ while $\Tr[W\tau]\leq0$ for any separable %operator $\tau$ acting on systems $A_0,B_0$. In the measurement-device-independent picture, with a proper decomposition of $W$,
One can transform the witness $W$  into a semi-quantum game with the parameters $\beta^{x,y}$ and quantum inputs $\{\psi_x^{A_0},\psi_y^{B_0}\}$  in a decomposition of $W$,
\begin{equation}\label{MDI Bell operator}
    W = \sum_{x,y}\beta^{x,y}\psi_x^{A_0}\otimes\psi_y^{B_0},
\end{equation}
and a negative score $S<0$ implies the entanglement in systems $A,B$.
Furthermore, these semi-quantum game can also be applied to detect detailed entanglement structure \cite{zhao2016efficient}, estimate the robustness of entanglement, negativity and randomness \cite{vsupic2017measurement}.

%\qi{witness <0, $\ge$ 0 separable}

\emph{Simultaneous certification of entangled state and measurements.---}
In previous works, only prepared states or channels are characterized via a semi-quantum game.
To further elaborate the application of semi-quantum games, we now investigate the problem of certifying entangled states and joint measurements simultaneously. We first define the concept of entangled measurement operators and entangled measurements:

%\qi{merge into the first section  First we make a remark on the relationship between Bell tests and semi-quantum games. Since Bell tests can be seen as semi-quantum games with orthogonal input states, a device-independent certification can be naively taken as a semi-quantum one.
%a naive generalisation from Bell tests to semi-quantum games can give the same certification results in the field of device-independent tests.
%However, one may expect a result realisable only with the use of non-orthogonal quantum inputs. For this matter we focus on the measurements. }

\emph{Definition 1 (Entangled Measurements).---}A measurement operator $M_{ab}^{AB}$ acting on the system $\mathcal{H}_A\otimes\mathcal{H}_B$ is called entangled if it cannot be expressed as $\sum_{i,j}N_{i}^A\otimes N_{j}^B$ with a set of semi-definite positive operators $\{N_{i}^A,N_{j}^B|N_{i}^A\in\mathcal{L}(\mathcal{H}_A), N_{j}^B\in\mathcal{L}(\mathcal{H}_B),N_{i}^A,N_{j}^B\succeq 0\}$, otherwise it is called a separable measurement operator. A POVM $\{M_{ab}^{AB}\}$ is called entangled if at least one of its measurement operators is entangled.

%\qi{need revision}
Entangled measurements, especially  Bell state measurement (BSM), play an important role in many quantum information processing tasks like entanglement swapping and teleportation.
%Though we can treat a standard Bell test as a semi-quantum game with orthogonal inputs, we do not need to use entangled measurement operators to obtain a nonlocal behaviour.
%For instance, in a Bell test with two possible inputs, we can take the inputs as orthogonal states $\ket{0}^{A_0},\ket{1}^{A_0}$. If we have derived that the input $\ket{0}^{A_0}$, $\ket{1}^{A_0}$  determines the observable $\sigma_X$, $\sigma_Z$ acting on $\mathcal{H}_A$, respectively. %up to local isometries, then Alice can use the following measurement
%\begin{equation}
%\begin{aligned}
%   \{M_a^{A_0A}\} = \{\ketbra{0}^{A_0}\otimes\ketbra{+}^{A}, \ketbra{0}^{A_0}\otimes\ketbra{-}^{A}, \\ \ketbra{1}^{A_0}\otimes\ketbra{0}^{A}, \ketbra{1}^{A_0}\otimes\ketbra{1}^{A}\},
%\end{aligned}
%\end{equation}
%to produce the nonlocal behaviour, and all her measurement operators are separable on $\mathcal{H}_{A_0}\otimes\mathcal{H}_A$.
In the measurement-device-independent protocols, entangled measurement operators
%like Bell state measurement operators
are essential for a nonlocal behaviour, too.
%In previous works special types of entangled measurement operators like Bell state measurement operators have already been employed. In fact,
%Thus semi-quantum games can also be applied to witness the entanglement of measurement operators.
we have the following result for semi-quantum games:

\emph{Theorem 1.---}In a semi-quantum game where the input states and appointed scores form an entanglement witness given by Eq.~\eqref{MDI Bell operator}, an observation of an average score $S=\Tr[W\tilde{M}_{0,0}^{A_0B_0}]<0$ necessarily indicates that $M_a^{A_0A}$ is an entangled POVM operator on systems $A_0$ and $A$, and the same property holds for $M_b^{BB_0}$.

%An entangled operator $\tilde{M}_{ab}^{A_0B_0}$ on systems $A_0,B_0$ indicates that the POVM element $M_a^{A_0A}$ must be an entangled operator between systems $A_0$ and $A$, and the same property holds for $M_b^{BB_0}$.

%\qi{maybe not in this form, should be a witness for entangled measurement.}
The proof is given in Appendix.
%Similar to that incompatible measurements are a necessary ingredient for Bell nonlocality, entangled measurement operators play a key role in the novelty of semi-quantum games.
%This result can be seen as a counterpart of the fully device-independent case, where the violation of a Bell inequality indicates an entangled state and incompatible local measurements simultaneously.
%Since entangled effective POVM operators can be certified with appropriate quantum input states,
This theorem together with previous results show that entangled measurements and state can be witnessed by investigating the effective POVM elements. Now we take one step further and try to certify them via a quantitative investigation on the effective POVM measurement operators $\tilde{M}_{ab}^{A_0B_0}$. We consider the case with a bounded dimensional Hilbert space, specifically, the case where the unknown state is a pair of qubits, and we design the semi-quantum games such that all quantum inputs are qubits as well, i.e. $\mathcal{H}_A \cong\mathcal{H}_B \cong\mathcal{H}_{A_0} \cong\mathcal{H}_{B_0} \cong\mathcal{H}^2$.

We follow the design of measurement-device-independent entanglement witness and also focus on one pair of measurement operators corresponding to a single outcome, say $(a,b)=(0,0)$. For brevity, we omit the subscripts and denote the measurement operators as $M^{A_0A}, M^{BB_0}$ and the effective POVM element as $\tilde{M}^{A_0B_0}$.
In Eq.~\eqref{MDI Bell operator}, we choose an operator $W$ with the spectral decomposition $W = \sum_{i=1}^4 l_i \ketbra{\psi_i}$
%\begin{equation}\label{EW1}
   % W = \sum_{i=1}^4 l_i \ket{\psi_i}\bra{\psi_i},
%\end{equation}
and design a following measurement-device-independent game:

\begin{tcolorbox}[title = {Design of the measurement-device-independent game:}]
Game $\{\psi_x^{A_0},\psi_y^{B_0},\beta_{0,0}^{x,y}\}$ is designed such that
\begin{equation}\nonumber
\begin{aligned}
&W = \sum_{i=1}^4 l_i \ketbra{\psi_i}= \sum_{x,y}\beta_{0,0}^{x,y}\psi_x^{A_0}\otimes\psi_y^{B_0},\\
%&\beta_{a,b}^{x,y}=\delta_{a,0}\cdot\delta_{b,0}\cdot\beta^{x,y}, \\
&l_1 > 0 > l_2 \geq l_3 \geq l_4,\,|l_2|,|l_3|,|l_4| \gg l_1, \\
& \forall i, j, \braket{\psi_i}{\psi_j}=\delta_{i,j},\\
&\ket{\psi_1} \text{ is an entangled pure state}.\\
\end{aligned}
\end{equation}
\end{tcolorbox}
%\tcbset{colframe = blue!50!black, colback = white,
        %colupper = red!50!black, fonttitle = \bfseries,
%        nobeforeafter, center title}
%\begin{tcolorbox}[colback=black!5!white,colframe=white,title=My Heading]
%\begin{equation}\nonumber
%\begin{aligned}
%    &l_1 > 0 > l_2 \geq l_3 \geq l_4,\,|l_2|,|l_3|,|l_4| \gg l_1, \\
%    &\ket{\psi_1} \text{ is entangled}.
%%\tcblower
%\end{aligned}
%\end{equation}
%\end{tcolorbox}
%\begin{equation}
%    |l_2|,|l_3|,|l_4| \gg l_1.
%\end{equation}
%\zxj{(In this way the components other than $\ket{\psi_1}$ are some punish terms. Yet this design seems quite artificial in the sense we need to appoint some negative scores with very large absolute values, and in order to maximize the average score a lot of input states are therefore ``wasted''. Yet for the games in which the absolute values of the punish terms, i.e. terms corresponding to the negative eigenvalues, are smaller, e.g. $\lambda_1 = -\lambda_2 = -\lambda_3 = -\lambda_4$, it is not easy to compute the largest score. In an extreme condition where we appoint $\lambda_2=\lambda_3=\lambda_4=0$, the largest score can be obtained with a separable $\ket{\Psi}$, which is not a desirable result.)} \\
%This can be achieved with an appropriate choice of $\beta^{x,y},\psi_x^{A_0},\psi_y^{B_0}$.
Now we give the main theorem in this letter:

\emph{Theorem 2.---}Under the condition that $\mathcal{H}_A \cong\mathcal{H}_B \cong\mathcal{H}_{A_0} \cong\mathcal{H}_{B_0} \cong\mathcal{H}^2$,
any two-qubit entangled pure state $\ket{\psi_1}$ and corresponding joint measurement operators $M^{A_0A} $, $M^{BB_0}$ can be certified, with the largest score  $l_1/4$ in the above game obtained. To be more specific, the observation of this score certifies that $(\rho^{AB})^T\sim \ketbra{\psi_1},\, M^{A_0A}, M^{BB_0} \sim \ketbra{\Phi^+}$, where the equivalence refers to a freedom of local unitary operations.

\emph{A sketch of proof.---}Our design of the semi-quantum game restricts that $\tilde{M}^{A_0B_0}$ must lie in the support of $\ketbra{\psi_1},\,\text{supp}\{\ketbra{\psi_1}\}$, in order to maximize the average score $S$. This allows us to represent the problem in an optimization form:
\begin{equation}\label{optimization}\nonumber
\begin{aligned}
    &\arg\max_{\rho^{AB},M_0^{A_0A},M_0^{BB_0}}\Tr[\tilde{M}^{A_0B_0}\ketbra{\psi_1}], \\
    &\textbf{Subject to:} \\
    &\tilde{M}^{A_0B_0}=\mathrm{Tr}_{AB}\left[\left(I^{A_0}\otimes\rho^{AB}\otimes I^{B_0}\right)\left(M_0^{A_0A}\otimes M_0^{BB_0}\right)\right], \\
    &\Tr[\tilde{M}^{A_0B_0}(I^{A_0B_0}-\ketbra{\psi_1})] = 0.
\end{aligned}
\end{equation}
Here $\arg$ refers to the arguments when the maximum of the objective function is taken. In order to maximize the objective function, $M^{A_0A},M^{BB_0}$ need to be projectors. Yet we do not make any assumptions on their ranks. Furthermore, the state $\rho^{AB}$ is in general a mixed state. However, we have the following lemmas showing that only rank-one measurement operators and a pure state can guarantee that $\tilde{M}^{A_0B_0}$ is projected on the span of a pure entangled state:

\emph{Lemma 1.---}For $\mathcal{H}_A \cong\mathcal{H}_B \cong\mathcal{H}_{A_0} \cong\mathcal{H}_{B_0} \cong\mathcal{H}^2$ and $M^{A_0A},M^{BB_0}$ rank-one projectors, a necessary condition for $\tilde{M}^{A_0B_0}$ to be a rank-one operator is that $\rho^{AB}$ is a pure state.
%if $\tilde{M}^{A_0B_0}$ is a rank-one operator, $\rho^{AB}$ needs to be a pure state.

\emph{Lemma 2.---}For a pure quantum state $\rho^{AB}=\ketbra{\Psi}\in\mathcal{D}(\mathcal{H}^2\otimes\mathcal{H}^2)$, a necessary condition for $\tilde{M}^{A_0B_0}$ to be a rank-one operator is that both $M^{A_0A}$ and $M^{BB_0}$ are rank-one operators.
%in order that the effective POVM element $\tilde{M}^{A_0B_0}$ is a rank-one operator, $M^{A_0A}$ needs to be a rank-one operator. The same holds for $M^{BB_0}$.

We leave the proofs in Appendix. For a general state $\rho^{AB}$ and POVM elements $M^{A_0A},\,M^{BB_0}$, we can apply spectral decomposition to them,
\begin{equation}\nonumber
\begin{aligned}
    \rho^{AB}&=\sum_i\lambda_i\ketbra{\Psi_i}, \\
    M^{A_0A} &= \sum_a\ketbra{\phi_a},\, M^{BB_0}=\sum_b\ketbra{\phi_b}.
\end{aligned}
\end{equation}
For fixed $a,b$, the operator
\begin{equation}\nonumber
\Tr_{AB}\left[\left(I^{A_0}\otimes\rho^{AB}\otimes I^{B_0}\right)\\\left(\ketbra{\phi_a}\otimes \ketbra{\phi_b}\right)\right]
\end{equation}
is positive, thus $\tilde{M}^{A_0B_0}$, the sum of above operators, is rank-1 only if these operators all project on $\text{supp}\{\ketbra{\psi_1}\}$. Via Lemma 1, we relax $\rho^{AB}$ to be pure. Then with a pure $\rho^{AB}$, we require $M^{A_0A},\,M^{BB_0}$ both to be rank-1 operators so that $\tilde{M}^{A_0B_0}\in\text{supp}\{\ketbra{\psi_1}\}$.

We now can focus on the case with rank-one projectors and a pure state.
First we apply the Schmidt decomposition to $\ket{\psi_1}$, and without loss of generality it can be written as $\ket{\psi_1} = \cos{\chi}\ket{00} + \sin{\chi}\ket{11}$,
where $\ket{0},\ket{1}$ represent orthogonal bases in systems $A_0,B_0$. Since the measurement operators can be treated as pure bipartite states now, we now denote them as $M^{A_0A}=\ketbra{\phi^{A_0A}}, M^{BB_0}=\ketbra{\phi^{BB_0}}$. Furthermore, we express them in the form of
\begin{equation}\label{rank-1 M}
\begin{aligned}
    \ket{\phi^{A_0A}} = \cos{\alpha}\ket{0} \ket{\varphi^A} + \sin{\alpha}\ket{1} \ket{\bar{\varphi}^A}, \\
    \ket{\phi^{BB_0}} = \cos{\beta}\ket{0} \ket{\varphi^B} + \sin{\beta}\ket{1} \ket{\bar{\varphi}^B}.
\end{aligned}
\end{equation}
In Eq.~\eqref{rank-1 M}, the state vectors $\ket{0},\ket{1}$ acting on systems $A_0,B_0$ are the same as the ones in representing $\ket{\psi_1}$. The vectors $\ket{\varphi^A},\ket{\bar{\varphi}^A}\in\mathcal{H}_A, \, \ket{\varphi^B},\ket{\bar{\varphi}^B}\in\mathcal{H}_B$ are normalized, yet in general $\bra{\varphi^A}\ket{\bar{\varphi}^A},\bra{\varphi^B}\ket{\bar{\varphi}^B} \neq 0$. Therefore, we can express $\tilde{M}^{A_0B_0} = \ketbra{\tilde{\Psi}}$ where $\ket{\tilde{\Psi}}$ is a sub-normalized vector
\begin{equation}\nonumber
\begin{aligned}
&\cos{\alpha}\cos{\beta}\bra{\Psi}\ket{\varphi^A\varphi^B}\ket{00} +\sin{\alpha}\sin{\beta}\bra{\Psi}\ket{\bar{\varphi}^A\bar{\varphi}^B}\ket{11} \\ +&\cos{\alpha}\sin{\beta}\bra{\Psi}\ket{\varphi^A\bar{\varphi}^B}\ket{01} +\sin{\alpha}\cos{\beta}\bra{\Psi}\ket{\bar{\varphi}^A\varphi^B}\ket{10}.
\end{aligned}
\end{equation}
Requiring $\ket{\tilde{\Psi}}$ to be parallel to $\ket{\psi_1}$ indicates that $\bra{\Psi}\ket{\varphi^A\bar{\varphi}^B} = \bra{\Psi}\ket{\bar{\varphi}^A\varphi^B}=0$. Under this condition we have the following lemma:

\emph{Lemma 3.---}In the condition that $\bra{\Psi}\ket{\varphi^A\bar{\varphi}^B} = \bra{\Psi}\ket{\bar{\varphi}^A\varphi^B}=0$, where $\ket{\varphi^A},\ket{\bar{\varphi}^A},\ket{\varphi^B},\ket{\bar{\varphi}^B}$ are introduced by Eq.~\eqref{rank-1 M}, we have $\left|\bra{\Psi}\ket{\varphi^A\varphi^B}\right|^2+\left|\bra{\Psi}\ket{\bar{\varphi}^A\bar{\varphi}^B}\right|^2 \leq 1$. The equality is taken iff $\bra{\varphi^A}\ket{\bar{\varphi}^A}=\bra{\varphi^B}\ket{\bar{\varphi}^B}=0$, and the subspaces of the two parties are spanned by $\left\{\ket{\varphi^A},\,\ket{\bar{\varphi}^A}\right\},\, \left\{\ket{\varphi^B},\,\ket{\bar{\varphi}^B}\right\}$, respectively.

While the lemma is obvious in the bounded dimension Hilbert space we now consider, it is worth noting that it does not rely on a restriction of dimensions of $\mathcal{H}_A,\mathcal{H}_B$. We leave the proof for this lemma in the general case in Appendix.

With this lemma and the requirement that $\ket{\tilde{\Psi}}$ lying parallel to $\ket{\psi_1}$, we can derive that
\begin{equation}
\begin{aligned}
    &\Tr[\tilde{M}^{A_0B_0}\ket{\psi_1}\bra{\psi_1}] \\ \leq &\dfrac{\sin^2{\alpha}\sin^2{\beta}\cos^2{\alpha}\cos^2{\beta}}{\sin^2{\alpha}\sin^2{\beta}\cos^2{\chi} + \cos^2{\alpha}\cos^2{\beta}\sin^2{\chi}}.
\end{aligned}
\end{equation}
The equality is reached when $|\bra{\Psi}\ket{\varphi^A\varphi^B}|^2 + |\bra{\Psi}\ket{\bar{\varphi}^A\bar{\varphi}^B}|^2 = 1$. The maximum of this expression is $1/4$, taken when $\sin^2{\alpha} = \sin^2{\beta} = 1/2$, which corresponds to Bell state measurement operators, and $\bra{\Psi}\ket{\varphi^A\varphi^B}/\bra{\Psi}\ket{\bar{\varphi}^A\bar{\varphi}^B} = \cos{\chi}/\sin{\chi}$, yielding the conclusion that $\ket{\Psi}$ is equivalent to $\ket{\psi_1}$. This finishes our proof.

\emph{Certification in a source-independent entanglement swapping game.---}
%\zxj{Simultaneous certification of entangled projector and Bell states.---}
It is interesting to notice that the entanglement swapping protocol shares a dual form of the semi-quantum game if we exchange the roles of state and measurements, shown in
%Hence,
%Therefore, we can construct a dual quantum game in the manner of the entanglement swapping protocol.}
Fig.~\ref{Fig:EntSwap}. We have two independent bipartite states, $\rho^{A_0A}\in\mathcal{D}(\mathcal{H}_{A_0}\otimes\mathcal{H}_{A}), \rho^{BB_0}\in\mathcal{D}(\mathcal{H}_{B}\otimes\mathcal{H}_{B_0})$.
%In a standard entanglement swapping protocol, $\rho^{A_0A},\rho^{BB_0}$ are Bell states $\ket{\Phi^+}\bra{\Phi^+}$ and $\{M_i^{AB}\}$ is a Bell state measurement. By the end of the protocol the systems $A_0,B_0$ will end up in Bell states for each measurement result $i$ on systems $A,B$.
Different from the standard entanglement swapping protocol, here we consider a scenario with uncharacteristic states $\rho^{A_0A},\rho^{BB_0}$ and an uncharacteristic joint measurement $\{M_i^{AB}\}$ acting on systems $A,B$. Compared to the semi-quantum game in Fig.~1 where trusted quantum inputs are utilised, suppose now we can carry out well-characterised local measurements $\{N_a^{A_0}\},\{N_b^{B_0}\}$ on systems $A_0,B_0$. Thus our entanglement swapping game is similar to a (non)bilocal scenario \cite{PhysRevLett.104.170401,PhysRevA.85.032119}, except for that we can now apply trusted measurements on the systems $A_0,B_0$. Because the uncharacteristic underlying state $\rho^{A_0A}\otimes\rho^{BB_0}$, we say the protocol shares a source-independent nature \cite{cao2016source}.
\begin{figure}[!hbt]
\centering \resizebox{7cm}{!}{\includegraphics{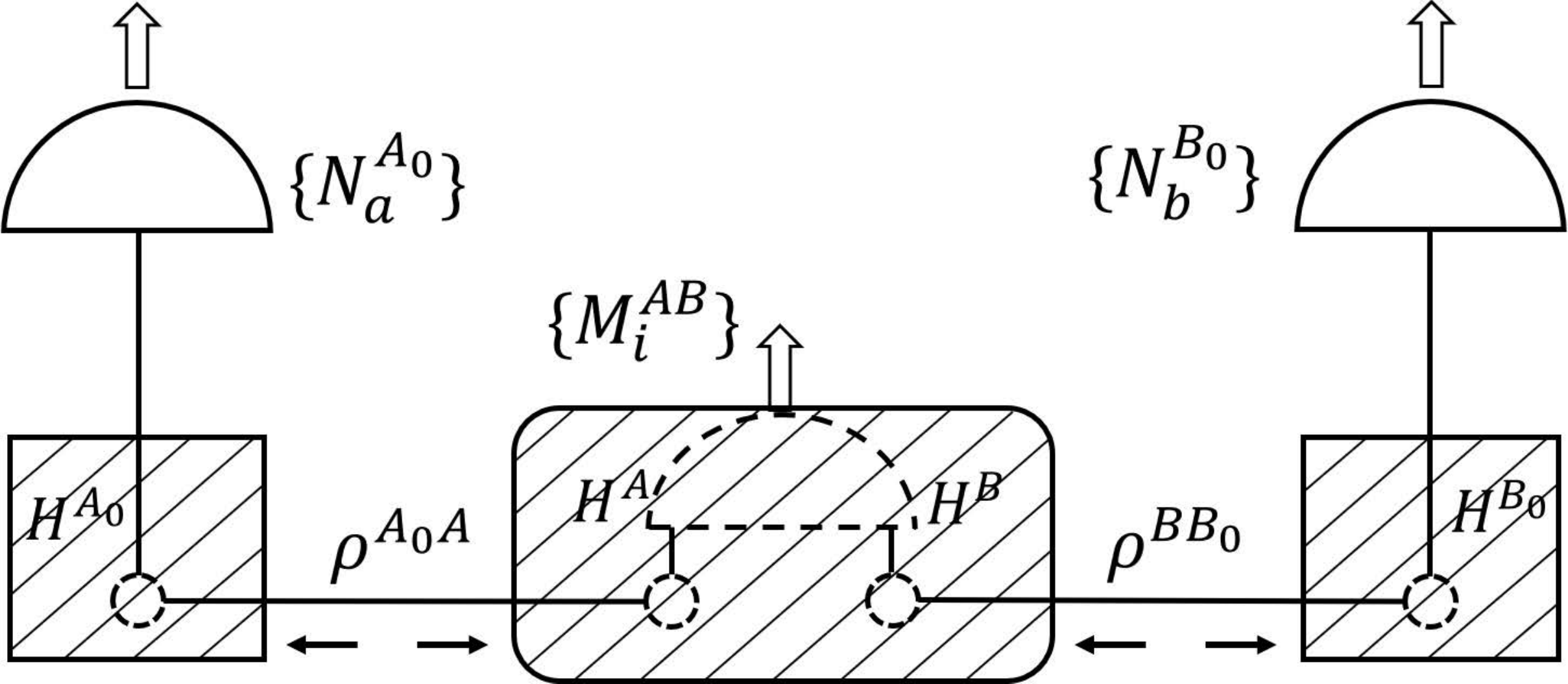}}
\caption{The uncharacteristic entanglement swapping protocol. In this protocol, two uncharacteristic states $\rho^{A_0A}\in\mathcal{D}(\mathcal{H}_{A_0}\otimes\mathcal{H}_{A}),\rho^{BB_0}\in\mathcal{D}(\mathcal{H}_{B}\otimes\mathcal{H}_{B_0})$ are prepared. No information is known about the two states except for that they are both a pair of qubits. A so-called Bell state measurement is performed on the systems $\mathcal{H}_{A},\mathcal{H}_{B}$. One can use well-characteristic local measurements $\{N_a^{A_0}\},\{N_b^{B_0}\}$ on systems $A_0,B_0$ to certify the results.} \label{Fig:EntSwap}
\end{figure}

Similar to the treatment of effective POVM operators, we can view the protocol as first preparing a sub-normalised positive semi-definite operator on $A_0,B_0$
\begin{equation}
    \tilde{\rho}_{i}^{A_0B_0} = \Tr_{AB}\left[\left(I^{A_0}\otimes M_i^{AB}\otimes I^{B_0}\right)\left(\rho^{A_0A}\otimes \rho^{BB_0}\right)\right]
\end{equation}
and measuring the systems $A_0,B_0$ subsequently. The probability of measurement outcome $i$ on systems $A,B$ is $\Tr[\tilde{\rho}_{i}^{A_0B_0}]$ and the corresponding state of systems $A_0,B_0$ becomes $\tilde{\rho}_{i}^{A_0B_0}/\Tr[\tilde{\rho}_{i}^{A_0B_0}]$. If we focus on one specific measurement result on systems $A,B$, with a duality treatment on states and measurements, the source-independent entanglement swapping protocol can be transformed into the measurement-device-independent-type semi-quantum game, except for a difference in the normalisation factors of states and measurement operators. Therefore, we may construct a dual semi-quantum game to certify the initial systems $A_0,A$ and $B,B_0$ and joint measurement operator simultaneously, with measurements on the final systems $A_0,B_0$ only.
%\zxj{try to state that the setting is reasonable: a common tomography will require measurements on all systems}
%One interesting application of our result is that we can exchange the positions of state and measurements, and construct a quantum game in the manner of the entanglement swapping protocol.
%\qi{semi-bilocal, duality}
%We now consider an uncharacteristic entanglement swapping setting

%If $\rho^{A_0A},\rho^{BB_0}$ are Bell states $\ket{\Phi^+}\bra{\Phi^+}$ and $\{M_i^{AB}\}$ is a Bell state measurement indeed, by the end of the protocol the systems $A,B$ and $A_0,B_0$ will end up in Bell states for each measurement result $i$ on systems $A,B$. However, we are not certain if the protocol goes as one wishes for as the states and the measurement are not trusted.

%In our scenario, we only rely on one measurement operator on each side, and we have shown that they can be treated as bipartite entangled states in our semi-quantum game. Via a duality of state and measurements, we can transform an entanglement swapping setting into our scenario.

%Now suppose we can carry out well-characterised local measurements $\{N_a^{A_0}\},\{N_b^{B_0}\}$ on systems $A_0,B_0$. Our design is similar to a (non)bilocal scenario, except for that we can now apply trusted measurements on the systems $A_0,B_0$.

%The state originates from uncharacteristic independent quantum state sources and joint measurement, hence in this sense the protocol shares a source-independent nature.
With a set of tomographically-complete measurements $\{N_a^{A_0}\},\{N_b^{B_0}\}$ we can certify the final state of systems $A_0,B_0$.
%If we only focus on one measurement outcome, with a duality treatment on states and measurements, the source-independent entanglement swapping protocol can be transformed into the measurement-device-independent game, except for a difference in the normalisation factors of states and measurement operators.
In a dual semi-quantum game, a score $\beta_i^{a,b}$ is appointed to Alice and Bob based on the measurement results of $\{N_a^{A_0}\},\{N_b^{B_0}\},\{M_i^{AB}\}$, and the average score Alice and Bob can get is
\begin{equation}
\begin{aligned}
    S %&= \sum_{a,b,i}\beta_{i}^{a,b}P(a,b|\psi_x^{A_0},\psi_y^{B_0}) \\
      = \sum_{i,a,b}\beta_{i}^{a,b}\Tr[(N_a^{A_0}\otimes M_i^{AB}\otimes N_b^{B_0})(\rho^{A_0A} \otimes \rho^{BB_0})].
\end{aligned}
\end{equation}
We now design a specific game as follows:

\begin{tcolorbox}[title = {Design of the source-independent entanglement swapping game:}]
Game $\{N_a^{A_0},N_b^{B_0},\beta_{i}^{a,b}\}$ is designed such that
\begin{equation}\nonumber
\begin{aligned}
V &= \sum_{a,b,i}\beta_{i}^{a,b}N_a^{A_0}\otimes N_b^{B_0} = \delta_{i,1}\sum_{a,b}\beta^{a,b}N_a^{A_0}\otimes N_b^{B_0}\\ &= \delta_{i,1}\left[\ketbra{\psi}-l\left(I-\ketbra{\psi}\right)\right], \\&\ketbra{\psi}\text{ is an entangled projector, } l\gg0.
%&\beta_{a,b}^{x,y}=\delta_{a,0}\cdot\delta_{b,0}\cdot\beta^{x,y}, \\
%&l_1 > 0 > l_2 \geq l_3 \geq l_4,\,|l_2|,|l_3|,|l_4| \gg l_1, \\
%&\ket{\psi_1} \text{ is entangled}.\\
\end{aligned}
\end{equation}
\end{tcolorbox}

%and hope to certify if the entanglement swapping succeeds.
%It can be transformed to our semi-quantum game,
%since with a duality treatment on the states and measurements,
%The sub-normalised operator on $A_0,B_0$ corresponding to a measurement outcome is
%\begin{equation}
%    \tilde{\rho}_{i}^{A_0B_0} = \Tr_{AB}\left[\left(I^{A_0}\otimes M_i^{AB}\otimes I^{B_0}\right)\left(\rho^{A_0A}\otimes \rho^{BB_0}\right)\right].
%\end{equation}
%This operator is similar to $\tilde{M}_{ab}^{A_0B_0}$ in Eq.~\eqref{partial POVM}, with the positions of states and measurements exchanged.

In this game, the average score can be expressed as $S = \Tr[V\tilde{\rho}_1^{A_0B_0}]$. In order to maximize the score, the operator $\rho_1^{A_0B_0}$ needs to be embedded in the support of $\ketbra{\psi}$, which is much the same as the idea in the measurement-device-independent game. In our proofs for Lemma 1 and Lemma 2, we do not rely on the normalisation factors, but mainly the orthogonality between different eigenvectors in the spectral decomposition. Hence with a similar route, one can come to the following theorem:

\emph{Theorem 3.---}In the above source-independent entanglement swapping game with the assumption that all systems are two-dimensional, the largest score that can be obtained is $1/4$, and it is obtained only if $(M_1^{AB})^T = \ketbra{\psi}, \rho^{A_0A}, \rho^{BB_0} = \ket{\Phi^+}\bra{\Phi^+}$, up to local unitary operations.

%The basic idea in designing the game is that $V$ is the witness for the Bell state $\ket{\Phi^+}$ By decomposing the Bell state into an affine combination of the tensor products of local observables $\{N_a^{A_0}\},\{N_b^{B_0}\}$, one can verify the result of the protocol with local measurements of these observables and a corresponding post-processing.
%One can easily design appropriate local measurements $\{N_a^{A_0}\},\{N_b^{B_0}\}$ to detect the sub-normalised operator $\tilde{\rho}_i^{A_0B_0}$. For example, as
%\begin{equation}
%\ket{\Phi^+}\bra{\Phi^+} = \dfrac{1}{4}\left(\sigma_x\otimes\sigma_x - \sigma_y\otimes\sigma_y + \sigma_z\otimes\sigma_z\right),
%\end{equation}
%we can simply use $\{\sigma_x,\sigma_y,\sigma_z\}$ as the local measurements for certification.
This theorem actually states that an entangled projector and Bell states can be certified in this game, if the state $\tilde{\rho}_i^{A_0B_0}/\Tr[\tilde{\rho}_i^{A_0B_0}]$ prepared on the systems $A_0,B_0$ is a pure entangled state with a probability of $1/4$. We can use the result to certify the case with multiple outputs of the joint measurement, by observing all sub-normalised operators $\{\tilde{\rho}_i^{A_0B_0}\}$. One special case is that we can certify an ideal entanglement swapping process corresponding to a complete Bell state measurement:

\emph{Corollary.---}Under the qubit system assumption, if all measurement outcomes occur with a probability of $1/4$ and the corresponding final states on the systems $A_0,B_0$ form a complete set of Bell states under a certain choice of local bases, one can certify that the joint measurement $\{M_i^{AB}\}$ is a set of complete Bell state measurements, and $\rho^{A_0A},\,\rho^{BB_0}$ are Bell states.

%If we are only concerned with whether entanglement arises between $A_0,B_0$, not necessarily an ideal case generating a Bell state, with our discovery on the qualitative certification of entangled structures
%
%In contrast to previous works where semi-quantum games are used for measurement-device-independent designs, in this setting we use it in a source-independent manner.

\emph{Conclusions.---} Via focusing on the effective POVM operator seen from the input ports in a semi-quantum game, we show that any pure entangled state and Bell state measurement operators can be certified up to local unitary operations. Moreover,  we present the certification of Bell states and an entangled joint measurement operator in a source-independent entanglement swapping protocol, following a similar technique. This technique can also be expected to to certify other type of semi-quantum game, for instance, certification of  a quantum memory channel \cite{rosset2018resource}.
%If an effective POVM operator is strictly an entangled state factored by 4, which is the product of systems' dimensions in our case, one can conclude that the state used in the semi-quantum game is equivalent to the effective POVM operator after normalisation, and the corresponding local measurements on each side should be Bell state measurement operators. Our analysis relies only on one measurement outcome, and this allows us to apply the semi-quantum game to a source-independent entanglement swapping protocol with a state-measurement duality argument. We have shown that one can verify an entanglement swapping

Due to a dimension restriction, we cannot treat our certification as a ``self-testing'' result. However, we conjecture that the same certification result can hold even if we do not restrict on the unknown system's dimension in the manner of self-testing. Lemma 3 in our approach relies only on the ranks of the measurement operators and quantum system rather than the system's dimension, and it may be a starting point for self-testing. Furthermore, robust semi-device-independent self-testing results can also be expected. It is also worth noting that our methods do not require a complete Bell state measurement result, which is often not easy to achieve in practice (in particular, it has been proved impossible to carry out a complete Bell state measurement in linear optics).
A robust semi-device-independent self-testing may be beneficial for practical blind quantum computing tasks \cite{reichardt2013classical,PhysRevLett.119.050503}. For future directions, it is also interesting to extend semi-quantum games into the non-i.i.d. region and parallel setting. We hope our work can shed light on the further exploration of applications of semi-quantum non-local games. %interesting directions open problems are extend the certification results to a more general case without  the  dimensional constraint.

%It is also interesting to extend  Most  focus on an i.i.d behaviour.
%sequential or parallel semi-quantum nonlocal games
%\qi{reduce the dimensional constraint, robust self-test, non-i.i.d, sequential, parallel,  }

\emph{Acknowledgement.---}
The authors would like to thank D. Cavalcanti, X. Yuan, X. Ma, Y. Liang, and Z. Wei for enlightening discussions. This work was supported by the National Natural Science Foundation of China Grants No.~11875173 and No.~11674193, the National Key R\&D Program of China Grants No.~2017YFA0303900 and No.~2017YFA0304004, and the Zhongguancun Haihua Institute for Frontier Information Technology.

%Therefore,
%\begin{equation}
%\begin{aligned}
   % \Tr[\tilde{M}^{A_0B_0}\ket{\psi_1}\bra{\psi_1}] =&\left|\bra{\tilde{\Psi}}\ket{\psi_1}\right|^2 \\
    %=&\left|\dfrac{\cos{\alpha}\cos{\beta}}{\cos{\chi}}\bra{\Psi}\ket{\varphi^A\varphi^B}\right|^2.
%\end{aligned}
%\end{equation}

\bibliographystyle{apsrev4-1}
%%%%%%%%%%%%%%%%%%%%%%%%%%%%%%%%%%%%%%%%

%%%%%%%%%%%%%%%%%%%%%%%%%%%%%%%%%%%%%%%%
% choose a .bib file
\bibliography{bibMDI}

\onecolumngrid
\appendix

\section{Proof of Theorem 1}
For convenience, we first restate the theorem in Main text:
\begin{theorem}
%An entangled operator $\tilde{M}_{ab}^{A_0B_0}=\Tr_{AB}\left[\left(I^{A_0}\otimes\rho^{AB}\otimes I^{B_0}\right)\left(M_a^{A_0A}\otimes M_b^{BB_0}\right)\right]$ indicates that the POVM element $M_a^{A_0A}$ must be an entangled operator acting on $\mathcal{H}_{A_0}\otimes\mathcal{H}_{A}$, and the same property holds for $M_b^{BB_0}$.
In a semi-quantum game where the input states and appointed scores form an entanglement witness, an observation of an average score $S=\Tr[W\tilde{M}_{0,0}^{A_0B_0}]<0$ necessarily indicates that $M_a^{A_0A}$ is an entangled POVM operator on systems $A_0$ and $A$, and the same property holds for $M_b^{BB_0}$.
\end{theorem}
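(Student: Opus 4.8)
The plan is to prove the contrapositive. I would show that if $M_a^{A_0A}$ were a \emph{separable} measurement operator in the sense of Definition~1, then the effective POVM element $\tilde{M}_{0,0}^{A_0B_0}$ defined in Eq.~\eqref{partial POVM} would itself be separable, and hence, by the defining property of the entanglement witness $W$ in Eq.~\eqref{MDI Bell operator}, one would have $S=\Tr[W\tilde{M}_{0,0}^{A_0B_0}]\geq 0$, contradicting the observed $S<0$. This is the ``measurement-side'' analogue of the result of~\cite{branciard2013measurement} that a separable shared state always yields a separable effective POVM element.

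First I would write the separability hypothesis as $M_a^{A_0A}=\sum_k N_k^{A_0}\otimes P_k^{A}$ with $N_k^{A_0},P_k^{A}\succeq 0$, and substitute it into Eq.~\eqref{partial POVM}. Because each $N_k^{A_0}$ acts only on $A_0$ and is untouched by the trace over $AB$, it factors out, leaving
\begin{equation}\nonumber
\tilde{M}^{A_0B_0}=\sum_k N_k^{A_0}\otimes Q_k^{B_0},\qquad Q_k^{B_0}=\Tr_{AB}\!\left[(\rho^{AB}\otimes I^{B_0})(P_k^{A}\otimes M_b^{BB_0})\right].
\end{equation}
The decisive step is to verify that each $Q_k^{B_0}$ is positive semi-definite. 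I would do this by sandwiching against an arbitrary $\ket{\phi}^{B_0}$: rewriting $\bra{\phi}Q_k^{B_0}\ket{\phi}=\Tr[(\rho^{AB}\otimes\ketbra{\phi})(P_k^{A}\otimes M_b^{BB_0})]$ exhibits it as the trace of a product of two positive semi-definite operators on $\mathcal{H}_A\otimes\mathcal{H}_B\otimes\mathcal{H}_{B_0}$, which is non-negative. Hence $\tilde{M}^{A_0B_0}$ takes the separable form $\sum_k N_k^{A_0}\otimes Q_k^{B_0}$ with both tensor factors positive.

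With separability of $\tilde{M}^{A_0B_0}$ in hand, I would invoke the witness inequality. Since $\tilde{M}^{A_0B_0}$ is only sub-normalised, I would write it as a non-negative multiple of a normalised separable state and use linearity to conclude $\Tr[W\tilde{M}^{A_0B_0}]\geq 0$, i.e.\ $S\geq 0$, contradicting the hypothesis. This forces $M_a^{A_0A}$ to be entangled; running the identical argument with the two parties exchanged gives the same conclusion for $M_b^{BB_0}$.

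The main obstacle I anticipate is the positivity of the partial contraction $Q_k^{B_0}$: the operator $(\rho^{AB}\otimes I^{B_0})(P_k^{A}\otimes M_b^{BB_0})$ is in general not positive, so positivity cannot be read off before taking the partial trace and must be recovered afterwards through the sandwiching-against-$\ket{\phi}$ trick. A secondary technical point is handling the sub-normalisation of $\tilde{M}^{A_0B_0}$ carefully, so that the witness inequality, which is usually stated for density operators, applies to a general sub-normalised separable positive operator.
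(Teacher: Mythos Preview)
Your proposal is correct and follows essentially the same contrapositive route as the paper: assume $M_a^{A_0A}$ is separable, factor the $A_0$-tensor legs out of the partial trace, and conclude that $\tilde{M}^{A_0B_0}$ is separable so that $\Tr[W\tilde{M}^{A_0B_0}]\geq 0$. If anything, you are more careful than the paper, which simply asserts separability of the resulting operator without explicitly checking positivity of the $B_0$-factor or addressing the sub-normalisation when invoking the witness.
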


\begin{proof}
Suppose the POVM operator $M_a^{A_0A}=\sum_{\lambda_1,\lambda_2}M_{\lambda_1}^{A_0}\otimes M_{\lambda_2}^{A},M_{\lambda_1}^{A_0}, M_{\lambda_2}^{A}\geq0,\,\forall\lambda_1,\lambda_2$, is a separable operator between $A_0,A$. We do not make constraints on the POVM operator $M_b^{BB_0}$ and the state $\rho^{AB}$. Then the sub-normalized operator $\tilde{M}_{ab}^{A_0B_0}$ is
\begin{equation}
\begin{aligned}
    \tilde{M}_{ab}^{A_0B_0} &= \Tr_{AB}\left[\left(I^{A_0}\otimes\rho^{AB}\otimes I^{B_0}\right)\left(M_a^{A_0A}\otimes M_b^{BB_0}\right)\right] \\
    &= \mathrm{Tr}_{AB}\left[\left(I^{A_0}\otimes\rho^{AB}\otimes I^{B_0}\right)\left[\left(\sum_{\lambda_1,\lambda_2}M_{\lambda_1}^{A_0}\otimes M_{\lambda_2}^{A}\right)\otimes M_b^{BB_0}\right)\right] \\
    &=\sum_{\lambda_1,\lambda_2}M_{\lambda_1}^{A_0}\otimes\Tr_{A}\left(M_{\lambda_2}^{A}\cdot\Tr_{B}\left[\left(I^{A}\otimes M_b^{BB_0})(\rho^{AB}\otimes I^{B_0}\right)\right]\right),
\end{aligned}
\end{equation}
which is a separable operator acting on $\mathcal{H}_{A_0},\mathcal{H}_{B_0}$. Similarly, $\tilde{M}_{ab}^{A_0B_0}$ is separable if $M_b^{BB_0}$ is separable on $B,B_0$.

From a qualitative perspective, we can conclude that an entangled $\tilde{M}_{ab}^{A_0B_0}$ not only indicates the entanglement of $\rho^{AB}$, but also an entangled structure of the measurement operators $M_a^{A_0A},M_b^{BB_0}$.
\end{proof}

\section{Proof of Lemma 1}
\begin{lemma}
For $\mathcal{H}_A \cong\mathcal{H}_B \cong\mathcal{H}_{A_0} \cong\mathcal{H}_{B_0} \cong\mathcal{H}^2$ and $M^{A_0A},M^{BB_0}$ rank-one projectors, in order that $\tilde{M}_{ab}^{A_0B_0}=\Tr_{AB}\left[\left(I^{A_0}\otimes\rho^{AB}\otimes I^{B_0}\right)\left(M_a^{A_0A}\otimes M_b^{BB_0}\right)\right]$ is a rank-1 operator, $\rho^{AB}$ needs to be a pure state.
\end{lemma}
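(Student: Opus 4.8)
The plan is to rewrite $\tilde M^{A_0B_0}$ as a manifestly positive sum of rank-one pieces, one for each eigenvector of $\rho^{AB}$, and then argue that these pieces can collapse onto a single ray only if $\rho^{AB}$ already has a single eigenvector. First I would spectrally decompose $\rho^{AB}=\sum_i\lambda_i\ketbra{\Psi_i}$ with $\lambda_i>0$ and $\{\ket{\Psi_i}\}$ orthonormal, and write the rank-one projectors as $M^{A_0A}=\ketbra{\phi^{A_0A}}$, $M^{BB_0}=\ketbra{\phi^{BB_0}}$. A short partial-trace computation (a ``link product'' of pure states) gives, for each $i$,
\begin{equation}\nonumber
\Tr_{AB}\!\left[\left(I^{A_0}\otimes\ketbra{\Psi_i}\otimes I^{B_0}\right)\left(\ketbra{\phi^{A_0A}}\otimes\ketbra{\phi^{BB_0}}\right)\right]=\ketbra{\tilde\Psi_i},
\end{equation}
where $\ket{\tilde\Psi_i}:=\bra{\Psi_i}\!\left(\ket{\phi^{A_0A}}\otimes\ket{\phi^{BB_0}}\right)\in\mathcal{H}_{A_0}\otimes\mathcal{H}_{B_0}$ is the sub-normalised vector obtained by contracting the $A,B$ legs. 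By linearity $\tilde M^{A_0B_0}=\sum_i\lambda_i\ketbra{\tilde\Psi_i}$.

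Since all $\lambda_i>0$, this operator is positive with range exactly $\mathrm{span}\{\ket{\tilde\Psi_i}\}$; hence $\tilde M^{A_0B_0}$ has rank one if and only if every nonzero $\ket{\tilde\Psi_i}$ is proportional to one common vector. In the game this common vector is the entangled state $\ket{\psi_1}$ on whose support $\tilde M^{A_0B_0}$ must lie. I would next record that this entanglement is indispensable: if $\ket{\phi^{A_0A}}$ (or $\ket{\phi^{BB_0}}$) were a product state $\ket{u}_{A_0}\ket{v}_A$, then every $\ket{\tilde\Psi_i}=\ket{u}_{A_0}\otimes(\cdots)_{B_0}$ would be a product vector and could not be proportional to an entangled $\ket{\psi_1}$. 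Therefore the requirement that the common ray be entangled forces both $\ket{\phi^{A_0A}}$ and $\ket{\phi^{BB_0}}$ to have full Schmidt rank two.

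The crux is then an injectivity argument. Setting $\ket{a_k}:=\bra{k}_A\ket{\phi^{A_0A}}$ and $\ket{b_l}:=\bra{l}_B\ket{\phi^{BB_0}}$, full Schmidt rank makes $\{\ket{a_0},\ket{a_1}\}$ and $\{\ket{b_0},\ket{b_1}\}$ bases of $\mathcal{H}_{A_0},\mathcal{H}_{B_0}$, so $\{\ket{a_k}\otimes\ket{b_l}\}$ is a basis of $\mathcal{H}_{A_0}\otimes\mathcal{H}_{B_0}$. Writing $\ket{\Psi}=\sum_{kl}e_{kl}\ket{kl}_{AB}$, the contraction map $T:\ket{\Psi}\mapsto\ket{\tilde\Psi}=\sum_{kl}\bar e_{kl}\ket{a_k}\otimes\ket{b_l}$ is then a conjugate-linear bijection from $\mathcal{H}_A\otimes\mathcal{H}_B$ onto $\mathcal{H}_{A_0}\otimes\mathcal{H}_{B_0}$, so it sends linearly independent vectors to linearly independent vectors. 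Consequently, if $\rho^{AB}$ had two (orthonormal, hence independent) eigenvectors $\ket{\Psi_1},\ket{\Psi_2}$, their images $T\ket{\Psi_1},T\ket{\Psi_2}$ would be independent and could not both be proportional to $\ket{\psi_1}$, contradicting rank one. Hence only one $\lambda_i$ is nonzero and $\rho^{AB}$ is pure. I expect the main obstacle to be exactly the role of entanglement: the bare rank-one condition is not enough (product measurement vectors yield a rank-one \emph{separable} $\tilde M^{A_0B_0}$ even for mixed $\rho^{AB}$), so the proof must use that the range of $\tilde M^{A_0B_0}$ is entangled in order to upgrade $T$ to a bijection.
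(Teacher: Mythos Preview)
Your argument is correct and in fact more careful than the paper's own. Both proofs open identically: spectrally decompose $\rho^{AB}=\sum_i\lambda_i\ketbra{\Psi_i}$, observe that $\tilde M^{A_0B_0}=\sum_i\lambda_i\ketbra{\tilde\Psi_i}$ with $\ket{\tilde\Psi_i}=(\bra{\Psi_i})_{AB}(\ket{\phi^{A_0A}}\otimes\ket{\phi^{BB_0}})$, and reduce the rank-one condition to the requirement that all nonzero $\ket{\tilde\Psi_i}$ lie on one ray. From there the paper writes the Schmidt decompositions of $\ket{\phi^{A_0A}},\ket{\phi^{BB_0}}$, expands each $\ket{\Psi_i}$ in the resulting product basis, and checks by explicit coordinate comparison that proportionality of the $\ket{\tilde\Psi_i}$'s is incompatible with orthogonality of the $\ket{\Psi_i}$'s. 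You instead recognise the contraction $\ket{\Psi}\mapsto\ket{\tilde\Psi}$ as a conjugate-linear map $T$ whose matrix (in product bases) is the Kronecker product of the coefficient matrices of the two measurement vectors; once both have full Schmidt rank these $2\times2$ matrices are invertible, $T$ is a bijection, and linear independence of orthonormal $\ket{\Psi_i}$'s transfers immediately to the $\ket{\tilde\Psi_i}$'s. This is cleaner and bypasses the coordinate algebra entirely.

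You also flagged something the paper leaves implicit: the lemma as literally stated is false without an entanglement hypothesis (take both measurement vectors product; then $\tilde M^{A_0B_0}$ is rank one for \emph{any} $\rho^{AB}$). Your proof supplies the missing ingredient --- that the rank-one range of $\tilde M^{A_0B_0}$ is the entangled $\ket{\psi_1}$, as enforced by the game design --- and uses it to force full Schmidt rank on both measurement vectors before invoking bijectivity of $T$. The paper's coordinate argument tacitly needs the same assumption (its four-term proportionality constraint only bites when $\sin\alpha,\cos\alpha,\sin\beta,\cos\beta$ are all nonzero), but does not isolate it.
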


\begin{proof}
We apply the spectra decomposition on $\rho^{AB}$
\begin{equation}
    \rho^{AB} = \sum_{i=1}^4\mu_i\ket{\Psi_i}\bra{\Psi_i},
\end{equation}
where $\sum_{i=1}^4\mu_i=1,\mu_i\geq0,\forall i$.
We have
\begin{equation}
\begin{aligned}
    \tilde{M}^{A_0B_0} &= \mathrm{Tr}_{AB}\left[\left(I^{A_0}\otimes\rho^{AB}\otimes I^{B_0}\right)\left(M^{A_0A}\otimes M^{BB_0}\right)\right] \\
    &= \sum_i \mu_i \mathrm{Tr}_{AB}\left[\left(I^{A_0}\otimes\ket{\Psi_i}\bra{\Psi_i} \otimes I^{B_0}\right)\left(M^{A_0A}\otimes M^{BB_0}\right)\right] \\
    &\equiv\sum_i \mu_i \ket{\tilde{\Psi}_i}\bra{\tilde{\Psi}_i}.
\end{aligned}
\end{equation}
As pointed out in the main text, $M^{A_0A},M^{BB_0}$ can be taken as projectors $M^{A_0A}=\ket{\phi^{A_0A}}\bra{\phi^{A_0A}}, M^{BB_0}=\ket{\phi^{BB_0}}\bra{\phi^{BB_0}}$. With our restriction on the space dimension, without loss of generality we can express the vectors $\ket{\phi^{A_0A}},\ket{\phi^{BB_0}}$ using Schmidt decomposition
\begin{equation}\label{SM:POVM Schmidt}
\begin{aligned}
    \ket{\phi^{A_0A}} = \cos{\alpha}\ket{\varphi^{A_0}} \ket{\varphi^A} + \sin{\alpha}\ket{{\varphi}^{A_0\bot}} \ket{{\varphi}^{A\bot}}, \\
    \ket{\phi^{BB_0}} = \cos{\beta}\ket{\varphi^{B_0}} \ket{\varphi^B} + \sin{\beta}\ket{{\varphi}^{B_0\bot}} \ket{{\varphi}^{B\bot}},
\end{aligned}
\end{equation}
where $\bra{\varphi^A}\ket{{\varphi}^{A\bot}} = \bra{\varphi^B}\ket{{\varphi}^{B\bot}} = \bra{\varphi^{A_0}}\ket{{\varphi}^{A_0\bot}} = \bra{\varphi^{B_0}}\ket{{\varphi}^{B_0\bot}} = 0$. Then we have
\begin{equation}
\begin{aligned}
\ket{\tilde{\Psi_i}}=&\cos{\alpha}\cos{\beta}\bra{\Psi_i}\ket{\varphi^A\varphi^B}\ket{\varphi^{A_0}\varphi^{B_0}} + \cos{\alpha}\sin{\beta}\bra{\Psi_i}\ket{\varphi^A{\varphi}^{B\bot}}\ket{\varphi^{A_0}{\varphi}^{B_0\bot}} \\
        &+\sin{\alpha}\cos{\beta}\bra{\Psi_i}\ket{{\varphi}^{A\bot}\varphi^B}\ket{{\varphi}^{A_0\bot}\varphi^{B_0}}
        +\sin{\alpha}\sin{\beta}\bra{\Psi_i}\ket{{\varphi}^{A\bot}{\varphi}^{B\bot}}\ket{{\varphi}^{A_0\bot}{\varphi}^{B_0\bot}}.
%=&\frac{1}{\sqrt{2}}\left(\cos{\alpha}\cos{\beta}\bra{\Psi_i}\ket{\varphi^A\varphi^B}+\sin{\alpha}\sin{\beta}\bra{\Psi_i}\ket{\bar{\varphi}^A\bar{\varphi}^B}\right)\ket{\Phi^+} \\ &+\frac{1}{\sqrt{2}}\left(\cos{\alpha}\cos{\beta}\bra{\Psi_i}\ket{\varphi^A\varphi^B}-\sin{\alpha}\sin{\beta}\bra{\Psi_i}\ket{\bar{\varphi}^A\bar{\varphi}^B}\right)\ket{\Phi^-} \\             &+\frac{1}{\sqrt{2}}\left(\cos{\alpha}\sin{\beta}\bra{\Psi_i}\ket{\varphi^A\bar{\varphi}^B}+\sin{\alpha}\cos{\beta}\bra{\Psi_i}\ket{\bar{\varphi}^A\varphi^B}\right)\ket{\Psi^+} \\ &+\frac{1}{\sqrt{2}}\left(\cos{\alpha}\sin{\beta}\bra{\Psi_i}\ket{\varphi^A\bar{\varphi}^B}-\sin{\alpha}\cos{\beta}\bra{\Psi_i}\ket{\bar{\varphi}^A\varphi^B}\right)\ket{\Psi^-}
\end{aligned}
\end{equation}
Now $\{\ket{\varphi^A},\ket{{\varphi}^{A\bot}}\}, \{\ket{\varphi^B},\ket{{\varphi}^{B\bot}}\}$ form complete bases for the corresponding spaces, therefore we have
\begin{equation}
    \left|\bra{\Psi_i}\ket{\varphi^A\varphi^B}\right|^2 + \left|\bra{\Psi_i}\ket{{\varphi}^{A\bot}{\varphi}^{B\bot}}\right|^2 + \left|\bra{\Psi_i}\ket{{\varphi}^{A\bot}\varphi^B}\right|^2 + \left|\bra{\Psi_i}\ket{\varphi^A{\varphi}^{B\bot}}\right|^2 = 1,\,\forall i.
\end{equation}
%While by constraints we have $\bra{\Psi_i}\ket{\bar{\varphi}^A\varphi^B} = \bra{\Psi_i}\ket{\varphi^A\bar{\varphi}^B} = 0$,
With this complete basis we can represent the eigenvectors of $\rho^{AB}$ as
\begin{equation}
    \ket{\Psi_i} = e^{i\theta_{i_1}}\cos{\mu_i}\cos{\gamma_i}\ket{\varphi^A\varphi^B} + e^{i\theta_{i_2}}\cos{\mu_i}\sin{\gamma_i}\ket{\varphi^A{\varphi}^{B\bot}} + e^{i\theta_{i_3}}\sin{\mu_i}\cos{\gamma_i}\ket{{\varphi}^{A\bot}\varphi^B} + e^{i\theta_{i_4}}\sin{\mu_i}\sin{\gamma_i}\ket{{\varphi}^{A\bot}{\varphi}^{B\bot}}.
\end{equation}
Hence
\begin{equation}
\begin{aligned}
\ket{\tilde{\Psi_i}}=&e^{i\theta_{i_1}}\cos{\mu_i}\cos{\gamma_i}\cos{\alpha}\cos{\beta}\ket{\varphi^{A_0}\varphi^{B_0}} +e^{i\theta_{i_2}}\cos{\mu_i}\sin{\gamma_i}\cos{\alpha}\sin{\beta}\ket{\varphi^{A_0}{\varphi}^{B_0\bot}} \\ &+e^{i\theta_{i_3}}\sin{\mu_i}\cos{\gamma_i}\sin{\alpha}\cos{\beta}\ket{{\varphi}^{A_0\bot}\varphi^{B_0}} + e^{i\theta_{i_4}}\sin{\mu_i}\sin{\gamma_i}\sin{\alpha}\sin{\beta}\ket{{\varphi}^{A_0\bot}{\varphi}^{B_0\bot}}. \\
\end{aligned}
\end{equation}
In order that the final operator $\tilde{\rho}^{A_0B_0}$ is rank-1, the only way is to guarantee that $\ket{\Psi_i}\sim\ket{\Psi_i},\,\forall i,j$. We can assume that the angle in the phase parameter $\theta_{i_k}=0$. Now we focus on two vectors, $\ket{\tilde{\Psi_1}},\ket{\tilde{\Psi_2}}$. By $\ket{\tilde{\Psi_1}}\sim\ket{\tilde{\Psi_2}}$ we require
\begin{equation}
    \cos{\mu_1}\cos{\gamma_1}:\cos{\mu_1}\sin{\gamma_1}:\sin{\mu_1}\cos{\gamma_1}:\sin{\mu_1}\sin{\gamma_1} = \cos{\mu_2}\cos{\gamma_2}:\cos{\mu_2}\sin{\gamma_2}:\sin{\mu_2}\cos{\gamma_2}:\sin{\mu_2}\sin{\gamma_2}.
\end{equation}
However, by spectral decomposition we have $\ket{\Psi_i}$ to be orthogonal to each other, which requires that
\begin{equation}
    \cos{\mu_1}\cos{\gamma_1}\cos{\mu_2}\cos{\gamma_2} + \cos{\mu_1}\sin{\gamma_1}\cos{\mu_2}\sin{\gamma_2} + \sin{\mu_1}\cos{\gamma_1}\sin{\mu_2}\cos{\gamma_2} + \sin{\mu_1}\sin{\gamma_1}\sin{\mu_2}\sin{\gamma_2} = 0.
\end{equation}
Obviously we cannot have a nontrivial result satisfying both requirements. Therefore, one cannot have a rank-1 $\tilde{\rho}^{A_0B_0}$ by measuring a mixed $\rho^{AB}$.
\end{proof}

\section{Proof of Lemma 2}
\begin{lemma}
For $\mathcal{H}_A \cong\mathcal{H}_B \cong\mathcal{H}_{A_0} \cong\mathcal{H}_{B_0} \cong\mathcal{H}^2$ and a pure quantum state $\rho^{AB}=\ket{\Psi}\bra{\Psi}\in\mathcal{D}(\mathcal{H}_A\otimes\mathcal{H}_B)$, in order that the effective POVM element $\tilde{M}^{A_0B_0}=\Tr_{AB}\left[\left(I^{A_0}\otimes\rho^{AB}\otimes I^{B_0}\right)\left(M_a^{A_0A}\otimes M_b^{BB_0}\right)\right]$ is a rank-one operator, $M^{A_0A}$ needs to be a rank-one operator. The same holds for $M^{BB_0}$.
\end{lemma}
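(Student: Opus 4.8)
The plan is to mirror the proof of Lemma~1 with the roles of the shared state and the measurements interchanged. Fix the pure state $\rho^{AB}=\ketbra{\Psi}{\Psi}$ and apply a spectral decomposition to the two POVM elements, $M^{A_0A}=\sum_a\ketbra{\phi_a}{\phi_a}$ and $M^{BB_0}=\sum_b\ketbra{\chi_b}{\chi_b}$, with mutually orthogonal eigenvectors inside each measurement. Since the partial trace of a positive operator against a fixed pure state produces a rank-one block, one obtains
\begin{equation}\nonumber
\tilde{M}^{A_0B_0}=\sum_{a,b}\ketbra{w_{ab}}{w_{ab}},\quad \ket{w_{ab}}=\left(\bra{\Psi}_{AB}\otimes I_{A_0B_0}\right)\left(\ket{\phi_a}_{A_0A}\otimes\ket{\chi_b}_{BB_0}\right).
\end{equation}
Hence $\tilde{M}^{A_0B_0}$ is rank one if and only if every $\ket{w_{ab}}$ is proportional to a single fixed vector $\ket{W}$. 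This is exactly the dual of Lemma~1, where the \emph{state} was decomposed and the measurements held fixed; here the measurements are decomposed and the state is fixed.

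I would then pass to coordinates by identifying each bipartite qubit vector with a $2\times2$ matrix, $\ket{\phi_a}\leftrightarrow C^{(a)}$, $\ket{\chi_b}\leftrightarrow D^{(b)}$, $\ket{\Psi}\leftrightarrow\Psi$, under which the contraction above becomes the matrix product $W_{ab}=C^{(a)}\bar{\Psi}D^{(b)}$, with $\ket{w_{ab}}$ its vectorisation. The rank-one requirement reads $C^{(a)}\bar{\Psi}D^{(b)}=\mu_{ab}W$ for scalars $\mu_{ab}$, where $W$ is the matrix attached to $\ket{W}$. Suppose for contradiction that $M^{A_0A}$ has rank at least two, so $\ket{\phi_0}\perp\ket{\phi_1}$ with $C^{(0)},C^{(1)}\neq0$. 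Pick indices with $\mu_{00}\neq0$, which exist because $\tilde{M}^{A_0B_0}\neq0$; then $\mu_{00}W=C^{(0)}\bar{\Psi}D^{(0)}$ is invertible, forcing $C^{(0)}$, $\bar{\Psi}$, $D^{(0)}$ all invertible. Comparing $C^{(0)}\bar{\Psi}D^{(0)}=\mu_{00}W$ with $C^{(1)}\bar{\Psi}D^{(0)}=\mu_{10}W$ and right-multiplying by $(\bar{\Psi}D^{(0)})^{-1}$ yields $C^{(1)}=(\mu_{10}/\mu_{00})C^{(0)}$, i.e. $\ket{\phi_1}\propto\ket{\phi_0}$, contradicting orthogonality. (Both subcases are immediate: if $\mu_{10}=0$ the same cancellation gives $C^{(1)}=0$, again a contradiction.) So $M^{A_0A}$ is rank one, and the symmetric argument settles $M^{BB_0}$.

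The main obstacle is that the cancellation step requires $\bar{\Psi}$ \emph{and} $W$ to be invertible, i.e. that both $\ket{\Psi}$ and the target rank-one operator $\ket{W}$ have full Schmidt rank (are entangled). This is not a technicality one can discard: if $\ket{\Psi}$ is a product state one can explicitly build a rank-two $M^{A_0A}$ whose effective operator $\tilde{M}^{A_0B_0}$ is still rank one, so the conclusion genuinely fails in the product case. In the setting where the lemma is used these hypotheses hold automatically, however: $\tilde{M}^{A_0B_0}$ must lie in the support of the entangled state $\ket{\psi_1}$, so the rank-one $\tilde{M}^{A_0B_0}$ is itself entangled ($W$ invertible), and Theorem~1 then forces $\rho^{AB}$ to be entangled ($\bar{\Psi}$ invertible). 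I would therefore run the argument under the standing assumption that $\ket{\Psi}$ is entangled, where the invertibility that drives the cancellation is guaranteed. As an alternative that stays closer to Lemma~1, one may avoid matrices entirely and expand each $\ket{w_{ab}}$ in a fixed product basis, showing directly that making all coefficient tuples parallel is incompatible with $\braket{\phi_0}{\phi_1}=0$.
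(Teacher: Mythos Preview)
Your proof is correct and takes a genuinely different route from the paper's. The paper proceeds by brute-force coordinate expansion: it assumes $M^{BB_0}$ is already rank one, Schmidt-decomposes two orthogonal eigenvectors $\ket{\phi_1^{A_0A}},\ket{\phi_2^{A_0A}}$ of $M^{A_0A}$, rewrites the second Schmidt basis in terms of the first, and then expands $\ket{\tilde\Psi_1},\ket{\tilde\Psi_2}$ in a common product basis. It then asserts that requiring $\ket{\tilde\Psi_1}\sim\ket{\tilde\Psi_2}$ together with $\braket{\phi_1^{A_0A}}{\phi_2^{A_0A}}=0$ forces all four overlaps $\braket{\Psi}{\varphi_1^A\varphi^B},\ldots$ to vanish, contradicting completeness of the qubit bases. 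Your vectorisation $W_{ab}=C^{(a)}\bar\Psi D^{(b)}$ replaces all of this with a one-line invertibility/cancellation argument, and it handles both measurements at once via the double sum $\sum_{a,b}$, whereas the paper's Appendix proof starts from the extra hypothesis that $M^{BB_0}$ is rank one and leaves the reduction to that case implicit.

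On the hypothesis you flag: you are right that the lemma as literally stated is false without entanglement---your product-state counterexample is valid---and the paper's proof carries the same hidden reliance. Its key assertion ``one can only have $\braket{\Psi}{\varphi_1^A\varphi^B}=\cdots=0$'' breaks exactly when one of the $\ket{\tilde\Psi_i}$ is the zero vector, which is the degenerate case your invertibility of $W$ and $\bar\Psi$ excludes. In the intended application both arguments are sound, since $\tilde M^{A_0B_0}$ is forced into $\mathrm{supp}\{\ketbra{\psi_1}\}$ with $\ket{\psi_1}$ entangled, and your appeal to Theorem~1 to get $\rho^{AB}$ entangled (hence $\bar\Psi$ invertible) is correct. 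So your approach is not only shorter but also more transparent about exactly which assumption is doing the work.
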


\begin{proof}
We assume that the POVM element on Bob's side is a rank-1 projector, $M^{BB_0}=\ket{\phi^{BB_0}}\bra{\phi^{BB_0}}$, while the element on Alice's side is not extreme. We now focus on two eigenvectors of $M^{A_0A}, \ket{\phi_1^{A_0A}},\ket{\phi_2^{A_0A}}$. We apply the Schmidt decomposition on these vectors and without loss of generality we write
\begin{equation}
\begin{aligned}
    &\ket{\phi_1^{A_0A}} = \cos{\alpha_1}\ket{\varphi_1^{A_0}} \ket{\varphi_1^A} + \sin{\alpha_1}\ket{{\varphi}_1^{A_0\bot}} \ket{{\varphi}_1^{A\bot}}, \\
    &\ket{\phi_2^{A_0A}} = \cos{\alpha_2}\ket{\varphi_2^{A_0}} \ket{\varphi_2^A} + \sin{\alpha_2}\ket{{\varphi}_2^{A_0\bot}} \ket{{\varphi}_2^{A\bot}}, \\
    &\ket{\phi^{BB_0}} = \cos{\beta}\ket{\varphi^{B_0}} \ket{\varphi^B} + \sin{\beta}\ket{{\varphi}^{B_0\bot}} \ket{{\varphi}^{B\bot}}.
\end{aligned}
\end{equation}
Using extra degrees of freedom, let
\begin{equation}
\begin{aligned}
    &\left\{
        \begin{array}{lr}
          \ket{\varphi_2^{A_0}} = \cos{\gamma}\ket{\varphi_1^{A_0}} + \sin{\gamma}\ket{{\varphi}_1^{A_0\bot}},\\
          \ket{{\varphi}_2^{A_0\bot}} = \sin{\gamma}\ket{\varphi_1^{A_0}} - \cos{\gamma}\ket{{\varphi}_1^{A_0\bot}},\\
        \end{array}
          \right.
          \\
    &\left\{
        \begin{array}{lr}
          \ket{\varphi_2^{A}} = \cos{\chi}\ket{\varphi_1^{A}} + \sin{\chi}\ket{{\varphi}_1^{A\bot}},\\
          \ket{{\varphi}_2^{A\bot}} = \sin{\chi}\ket{\varphi_1^{A}} - \cos{\chi}\ket{{\varphi}_1^{A\bot}}.\\
        \end{array}
          \right.
\end{aligned}
\end{equation}
By $\bra{\phi_1^{A_0A}}\ket{\phi_2^{A_0A}}=0$ we have
\begin{equation}\label{orthogonal}
    \cos{\alpha_1}\cos{\alpha_2}\cos{\gamma}\cos{\chi} + \cos{\alpha_1}\sin{\alpha_2}\sin{\gamma}\sin{\chi} + \sin{\alpha_1}\cos{\alpha_2}\sin{\gamma}\sin{\chi} + \sin{\alpha_1}\sin{\alpha_2}\cos{\gamma}\cos{\chi} = 0.
\end{equation}
For the effective POVM operator,
\begin{equation}
\begin{aligned}
    \tilde{M}^{A_0B_0} &= \mathrm{Tr}_{AB}\left[\left(I^{A_0}\otimes\rho^{AB}\otimes I^{B_0}\right)\left(M^{A_0A}\otimes M^{BB_0}\right)\right] \\
    &= \sum_i \mathrm{Tr}_{AB}\left[\left(I^{A_0}\otimes\ket{\Psi}\bra{\Psi} \otimes I^{B_0}\right)\left(\ket{\phi_i^{A_0A}}\bra{\phi_i^{A_0A}}\otimes M^{BB_0}\right)\right] \\
    &\equiv\sum_i \ket{\tilde{\Psi}_i}\bra{\tilde{\Psi}_i},
\end{aligned}
\end{equation}
with
\begin{equation}
\begin{aligned}
    \ket{\tilde{\Psi}_1} &=\cos{\alpha_1}\cos{\beta}\bra{\Psi}\ket{\varphi_1^A\varphi^B}\ket{\varphi_1^{A_0}\varphi^{B_0}} +\cos{\alpha_1}\sin{\beta}\bra{\Psi}\ket{\varphi_1^A{\varphi}^{B\bot}}\ket{\varphi_1^{A_0}{\varphi}^{B_0\bot}} \\
        &+\sin{\alpha_1}\cos{\beta}\bra{\Psi}\ket{{\varphi}_1^{A\bot}\varphi^B}\ket{{\varphi}_1^{A_0\bot}\varphi^{B_0}}
        +\sin{\alpha_1}\sin{\beta}\bra{\Psi}\ket{{\varphi}_1^{A\bot}{\varphi}^{B\bot}}\ket{{\varphi}_1^{A_0\bot}{\varphi}^{B_0\bot}}, \\
\end{aligned}
\end{equation}

\begin{equation}
\begin{aligned}
\ket{\tilde{\Psi}_2} =&\cos{\alpha_2}\cos{\beta}\bra{\Psi}\ket{\varphi_2^A\varphi^B}\ket{\varphi_2^{A_0}\varphi^{B_0}} +\cos{\alpha_2}\sin{\beta}\bra{\Psi}\ket{\varphi_2^A{\varphi}^{B\bot}}\ket{\varphi_2^{A_0}{\varphi}^{B_0\bot}} \\
        &+\sin{\alpha_2}\cos{\beta}\bra{\Psi}\ket{{\varphi}_2^{A\bot}\varphi^B}\ket{{\varphi}_2^{A_0\bot}\varphi^{B_0}}
        +\sin{\alpha_2}\sin{\beta}\bra{\Psi}\ket{{\varphi}_2^{A\bot}{\varphi}^{B\bot}}\ket{{\varphi}_2^{A_0\bot}{\varphi}^{B_0\bot}} \\
        =&[\cos{\alpha_2}\cos{\beta}\left(\cos{\gamma}\cos{\chi}\bra{\Psi}\ket{\varphi_1^A\varphi^B} + \cos{\gamma}\sin{\chi}\bra{\Psi}\ket{{\varphi}_1^{A\bot}\varphi^B}\right) \\&+ \sin{\alpha_2}\cos{\beta}\left(\sin{\gamma}\sin{\chi}\bra{\Psi}\ket{\varphi_1^A\varphi^B} - \sin{\gamma}\cos{\chi}\bra{\Psi}\ket{{\varphi}_1^{A\bot}\varphi^B}\right)]\ket{\varphi_1^{A_0}\varphi^{B_0}} \\+&[\cos{\alpha_2}\sin{\beta}\left(\cos{\gamma}\cos{\chi}\bra{\Psi}\ket{\varphi_1^A{\varphi}^{B\bot}} + \cos{\gamma}\sin{\chi}\bra{\Psi}\ket{{\varphi}_1^{A\bot}{\varphi}^{B\bot}}\right) \\&+ \sin{\alpha_2}\sin{\beta}\left(\sin{\gamma}\sin{\chi}\bra{\Psi}\ket{\varphi_1^A{\varphi}^{B\bot}} - \sin{\gamma}\cos{\chi}\bra{\Psi}\ket{{\varphi}_1^{A\bot}{\varphi}^{B\bot}}\right)]\ket{\varphi_1^{A_0}{\varphi}^{B_0\bot}} \\+&[\cos{\alpha_2}\cos{\beta}\left(\sin{\gamma}\cos{\chi}\bra{\Psi}\ket{\varphi_1^A\varphi^B} + \sin{\gamma}\sin{\chi}\bra{\Psi}\ket{{\varphi}_1^{A\bot}\varphi^B}\right) \\&+ \sin{\alpha_2}\cos{\beta}\left(-\cos{\gamma}\sin{\chi}\bra{\Psi}\ket{\varphi_1^A\varphi^B} + \cos{\gamma}\cos{\chi}\bra{\Psi}\ket{{\varphi}_1^{A\bot}\varphi^B}\right)]\ket{{\varphi}_1^{A_0\bot}\varphi^{B_0}}  \\+&[\cos{\alpha_2}\sin{\beta}\left(\sin{\gamma}\cos{\chi}\bra{\Psi}\ket{\varphi_1^A{\varphi}^{B\bot}} + \sin{\gamma}\sin{\chi}\bra{\Psi}\ket{{\varphi}_1^{A\bot}{\varphi}^{B\bot}}\right) \\&+ \sin{\alpha_2}\sin{\beta}\left(-\cos{\gamma}\sin{\chi}\bra{\Psi}\ket{\varphi_1^A{\varphi}^{B\bot}} + \cos{\gamma}\cos{\chi}\bra{\Psi}\ket{{\varphi}_1^{A\bot}{\varphi}^{B\bot}}\right)]\ket{{\varphi}_1^{A_0\bot}{\varphi}^{B_0\bot}}.
\end{aligned}
\end{equation}
%Since we cannot have $\ket{\varphi_1^{A_0}\bar{\varphi}^{B_0}}, \ket{\bar{\varphi}_1^{A_0}\varphi^{B_0}}$ terms, we have
%\begin{equation}
%\begin{aligned}
%        \sin{\alpha_2}\sin{\gamma}\cos{\chi} = \cos{\alpha_2}\cos{\gamma}\sin{\chi}, \\
%        \cos{\alpha_2}\sin{\gamma}\cos{\chi} = \sin{\alpha_2}\cos{\gamma}\sin{\chi}.
%\end{aligned}
%\end{equation}
%This indicates $\sin{\alpha_2} = \cos{\alpha_2}$. Then by Eq.~\eqref{orthogonal},
%\begin{equation}
%    (\cos{\alpha_1} + \sin{\alpha_1})(\cos{\gamma}\cos{\chi} + \sin{\gamma}\sin{\chi}) = 0.
%\end{equation}
%If $\cos{\alpha_1} + \sin{\alpha_1}=0$, then $\ket{\Psi_2}$ will give a vector proportional to $\ket{\Phi^-}$. If $\cos{\gamma}\cos{\chi} + \sin{\gamma}\sin{\chi}=0$, then $\sin{\gamma}\cos{\chi} = \cos{\gamma}\sin{\chi}$, otherwise $\ket{\varphi_1^{A_0}\bar{\varphi}^{B_0}}, \ket{\bar{\varphi}_1^{A_0}\varphi^{B_0}}\neq 0$. We cannot construct such $\gamma,\chi$.
In order that $\ket{\tilde{\Psi}_1}\sim\ket{\tilde{\Psi}_2}$ and Eq.~\eqref{orthogonal} hold, one can only have $\bra{\Psi}\ket{\varphi_1^A\varphi^B} = \bra{\Psi}\ket{{\varphi}_1^{A\bot}\varphi^B} =\bra{\Psi}\ket{\varphi_1^A{\varphi}^{B\bot}} =\bra{\Psi}\ket{{\varphi}_1^{A\bot}{\varphi}^{B\bot}} =0$, which is contradictory to that $\{\ket{\varphi^A},\ket{{\varphi}^{A\bot}}\}, \{\ket{\varphi^B},\ket{{\varphi}^{B\bot}}\}$ are complete bases. Then we can conclude that a general POVM will inevitably result in a mixed $\tilde{M}^{A_0B_0}$.
\end{proof}

\section{Proof of Lemma 3}
\begin{lemma}
For rank-one projectors $M^{A_0A}=\ketbra{\phi^{A_0A}}\in\mathcal{L}(\mathcal{H}_{A_0}\otimes\mathcal{H}_A), M^{BB_0}=\ketbra{\phi^{BB_0}}\in\mathcal{L}(\mathcal{H}_{B}\otimes\mathcal{H}_{B_0})$, where $\mathcal{H}_{A_0},\mathcal{H}_{B_0}\cong\mathcal{C}^2$, fix the measurement bases of $\mathcal{H}_{A_0},\mathcal{H}_{B_0}$ to be $\{\ket{0},\ket{1}\}_{A_0,B_0}$ and express the projectors as
\begin{equation}
\begin{aligned}
    \ket{\phi^{A_0A}} = \cos{\alpha}\ket{0} \ket{\varphi^A} + \sin{\alpha}\ket{1} \ket{\bar{\varphi}^A}, \\
    \ket{\phi^{BB_0}} = \cos{\beta}\ket{0} \ket{\varphi^B} + \sin{\beta}\ket{1} \ket{\bar{\varphi}^B}.
\end{aligned}
\end{equation}
In the condition that $\bra{\Psi}\ket{\varphi^A\bar{\varphi}^B} = \bra{\Psi}\ket{\bar{\varphi}^A\varphi^B}=0$, we have $\left|\bra{\Psi}\ket{\varphi^A\varphi^B}\right|^2+\left|\bra{\Psi}\ket{\bar{\varphi}^A\bar{\varphi}^B}\right|^2 \leq 1$. The equality is taken iff $\bra{\varphi^A}\ket{\bar{\varphi}^A}=\bra{\varphi^B}\ket{\bar{\varphi}^B}=0$, and the subspaces of the two parties are spanned by $\left\{\ket{\varphi^A},\,\ket{\bar{\varphi}^A}\right\},\, \left\{\ket{\varphi^B},\,\ket{\bar{\varphi}^B}\right\}$, respectively. In particular, no restriction on the dimensions of $\mathcal{H}_{A},\mathcal{H}_{B}$ is made.
\end{lemma}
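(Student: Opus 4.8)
The plan is to reduce the statement to a finite-dimensional, essentially two-qubit, optimization and then to a single elementary scalar inequality. First I would observe that all four product vectors $\ket{\varphi^A\varphi^B},\ket{\bar\varphi^A\bar\varphi^B},\ket{\varphi^A\bar\varphi^B},\ket{\bar\varphi^A\varphi^B}$ lie in $V_A\otimes V_B$, where $V_A=\mathrm{span}\{\ket{\varphi^A},\ket{\bar\varphi^A}\}$ and $V_B=\mathrm{span}\{\ket{\varphi^B},\ket{\bar\varphi^B}\}$ have dimension at most two, irrespective of $\dim\mathcal{H}_A,\dim\mathcal{H}_B$. Replacing $\ket{\Psi}$ by its projection $\ket{\Psi'}=(P_{V_A}\otimes P_{V_B})\ket{\Psi}$ leaves every inner product with these four vectors unchanged while only decreasing the norm, so $\|\Psi'\|\le\|\Psi\|=1$; this is exactly the step that removes any dependence on the ambient dimension. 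It therefore suffices to prove $|\braket{\Psi'}{\varphi^A\varphi^B}|^2+|\braket{\Psi'}{\bar\varphi^A\bar\varphi^B}|^2\le\|\Psi'\|^2$ inside $V_A\otimes V_B$, subject to $\braket{\Psi'}{\varphi^A\bar\varphi^B}=\braket{\Psi'}{\bar\varphi^A\varphi^B}=0$.

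Next I would introduce the partial inner products $\ket{\psi_A}=(\bra{\varphi^A}\otimes I)\ket{\Psi'}$ and $\ket{\bar\psi_A}=(\bra{\bar\varphi^A}\otimes I)\ket{\Psi'}$, both lying in the (at most) two-dimensional space $V_B$. Setting $a=\braket{\varphi^A}{\bar\varphi^A}$, $b=\braket{\varphi^B}{\bar\varphi^B}$, $p=|a|^2$, $q=|b|^2$, the two constraints read $\braket{\psi_A}{\bar\varphi^B}=0$ and $\braket{\bar\psi_A}{\varphi^B}=0$, so $\ket{\psi_A},\ket{\bar\psi_A}$ are each pinned to the one-dimensional orthocomplement of $\ket{\bar\varphi^B}$, resp.\ $\ket{\varphi^B}$, inside $V_B$. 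Since the overlap of $\ket{\varphi^B}$ with $(\ket{\bar\varphi^B})^{\perp}$ (and symmetrically) has modulus $\sqrt{1-q}$, the objective collapses to $(1-q)\bigl(\|\psi_A\|^2+\|\bar\psi_A\|^2\bigr)$. Fixing bases $\ket{\varphi^A}=\ket{0}_A$, $\ket{\bar\varphi^A}=a\ket{0}_A+\sqrt{1-p}\,\ket{1}_A$ and analogously on $B$, and solving the two linear constraints for two of the four amplitudes $w_0,w_1,w_2,w_3$ of $\ket{\Psi'}$, I expect the objective to reduce to $|w_0|^2+|\sqrt{(1-p)(1-q)}\,w_3-\overline{ab}\,w_0|^2$ in the free amplitudes $w_0,w_3$, while $\|\Psi'\|^2$ becomes a positive diagonal form in the same variables.

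The claim then reduces to showing that the difference $D=\|\Psi'\|^2-(\text{objective})$ is a positive-semidefinite quadratic form in $(w_0,w_3)$. Carrying out the (routine but phase-sensitive) bookkeeping, I anticipate that positivity of $D$ is equivalent to the single inequality $(p+q-pq)^2\ge pq$ for $p,q\in[0,1]$, which follows from $p+q-pq\ge\sqrt{pq}$, itself a consequence of AM--GM ($p+q\ge 2\sqrt{pq}$) together with $\sqrt{pq}\ge pq$. This gives $(\text{objective})\le\|\Psi'\|^2\le 1$. For the equality clause I would track when every inequality is tight: saturating $\|\Psi'\|\le 1$ forces $\ket{\Psi}$ to have no support outside $V_A\otimes V_B$, i.e.\ the two local supports are precisely $\mathrm{span}\{\ket{\varphi^A},\ket{\bar\varphi^A}\}$ and $\mathrm{span}\{\ket{\varphi^B},\ket{\bar\varphi^B}\}$, while $D=0$ with $\ket{\Psi'}\neq 0$ forces the scalar inequality to be an equality, which occurs only at $p=q=0$, i.e.\ $\braket{\varphi^A}{\bar\varphi^A}=\braket{\varphi^B}{\bar\varphi^B}=0$.

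The main obstacle I foresee is precisely this middle step: controlling the cross term coupling $w_0$ and $w_3$. The naive estimate $\|\psi_A\|^2+\|\bar\psi_A\|^2\le(1+\sqrt{p})\|\Psi'\|^2$, coming from the largest eigenvalue of $\ketbra{\varphi^A}+\ketbra{\bar\varphi^A}$, is too weak and can exceed $1$, so the argument must use \emph{both} orthogonality constraints simultaneously rather than one at a time; the correlation they impose between $\ket{\psi_A}$ and $\ket{\bar\psi_A}$ is what produces the sharp factor and the clean equality condition. Finally, the degenerate cases $p=1$ or $q=1$ (where $\ket{\bar\varphi^A}\parallel\ket{\varphi^A}$, so $V_A$ collapses to a line and the coordinate parametrization divides by zero) must be handled separately; there one constraint degenerates onto $\ket{\varphi^A\varphi^B}$ or $\ket{\bar\varphi^A\bar\varphi^B}$ and forces an amplitude to vanish, giving a strict inequality fully consistent with the stated equality condition.
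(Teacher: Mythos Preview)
Your approach is correct and takes a genuinely different route from the paper's. The paper parametrizes $\ket{\bar\varphi^A}=\cos\theta\ket{\varphi^A}+\sin\theta\ket{\varphi^{A\perp}}$ (likewise on $B$), uses the two constraints to eliminate the inner products $\braket{\Psi}{\varphi^A\varphi^{B\perp}}$ and $\braket{\Psi}{\varphi^{A\perp}\varphi^B}$, bounds $|\braket{\Psi}{\varphi^{A\perp}\varphi^{B\perp}}|^2$ by the Bessel-type defect $1-\sum|\cdot|^2$, and then treats the objective as a single-variable function $f(x)=ax^2+bx\sqrt{1-cx^2}+d$ in $x=|\braket{\Psi}{\varphi^A\varphi^B}|$, locating the critical points by calculus and checking that the maximal value is $\le 1$ with equality only at $\cos\theta=\cos\gamma=0$. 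You instead project onto $V_A\otimes V_B$ up front, making the dimension-independence explicit rather than hidden inside the completeness inequality, solve the two linear constraints for $w_1,w_2$ in terms of $w_0$, and reduce positivity of $D=\|\Psi'\|^2-(\text{objective})$ to a $2\times 2$ discriminant condition. That condition is indeed exactly $(p+q-pq)^2\ge pq$: with $s=p+q$, $t=pq$, $r=(1-p)(1-q)$ one finds the $|w_0|^2$ coefficient of $D$ equals $(s-3t+st-t^2)/r$, the $|w_3|^2$ coefficient equals $s-t$, and the worst-case cross term is $-2\sqrt{rt}\,|w_0||w_3|$, so $\det D\ge 0$ simplifies to $(s-t)^2\ge t$. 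Your route is more algebraic and arguably cleaner, bypassing the calculus entirely and landing on a single AM--GM step, whereas the paper's optimization is more mechanical but somewhat obscures the structure. Both handle the equality clause correctly; your tracking of $\|\Psi'\|=1$ for the ``support spanned by'' condition is more transparent than the paper's backward reference to its inequality~(subcompleteness).
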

\begin{proof}
In the proof for this lemma, we do not restrict the dimensions of $\mathcal{H}_A,\mathcal{H}_B$. Let
\begin{equation}
\begin{aligned}
    &\ket{\bar{\varphi}^A} = \cos{\theta}\ket{\varphi^A} + \sin{\theta}\ket{{\varphi}^{A\bot}},\\
    &\ket{\bar{\varphi}^B} = \cos{\gamma}\ket{\varphi^B} + \sin{\gamma}\ket{{\varphi}^{B\bot}},
\end{aligned}
\end{equation}
where $\ket{{\varphi}^{A\bot}},\,\ket{{\varphi}^{B\bot}}$ are some states that lay orthogonally to $\ket{\varphi^A},\,\ket{\varphi^B}$ in the subspaces of Alice and Bob, respectively. Then we have
\begin{equation}
    \bra{\Psi}\ket{\bar{\varphi}^A \bar{\varphi}^B} = \cos{\theta}\cos{\gamma}\bra{\Psi}\ket{\varphi^A \varphi^B} + \cos{\theta}\sin{\gamma}\bra{\Psi}\ket{\varphi^A {\varphi}^{B\bot}} + \sin{\theta}\cos{\gamma}\bra{\Psi}\ket{{\varphi}^{A\bot} \varphi^B} + \sin{\theta}\sin{\gamma}\bra{\Psi}\ket{{\varphi}^{A\bot} {\varphi}^{B\bot}}.
\end{equation}
Since $\bra{\Psi}\ket{\varphi^A\bar{\varphi}^B}=0$, we have
\begin{equation}\label{condition1}
    \cos{\gamma}\bra{\Psi}\ket{\varphi^A \varphi^B} + \sin{\gamma}\bra{\Psi}\ket{\varphi^A {\varphi}^{B\bot}}=0.
\end{equation}
Similarly,
\begin{equation}\label{condition2}
    \cos{\theta}\bra{\Psi}\ket{\varphi^A \varphi^B} + \sin{\theta}\bra{\Psi}\ket{{\varphi}^{A\bot} \varphi^B}=0.
\end{equation}
Therefore, we derive the following equation
\begin{equation}
    \bra{\Psi}\ket{\bar{\varphi}^A \bar{\varphi}^B} = -\cos{\theta}\cos{\gamma}\bra{\Psi}\ket{\varphi^A \varphi^B} + \sin{\theta}\sin{\gamma}\bra{\Psi}\ket{{\varphi}^{A\bot} {\varphi}^{B\bot}}.
\end{equation}
On the other hand, required by Eq.~\eqref{condition1}\eqref{condition2}, as long as $\ket{\varphi^A}\neq\ket{\bar{\varphi}^A},\, \ket{\varphi^B}\neq\ket{\bar{\varphi}^B}$, there are following relations
\begin{equation}
\begin{aligned}
    &\bra{\Psi}\ket{\varphi^A {\varphi}^{B\bot}} = -\frac{\cos{\gamma}}{\sin{\gamma}}\bra{\Psi}\ket{\varphi^A \varphi^B},\\
    &\bra{\Psi}\ket{{\varphi}^{A\bot} \varphi^B} = -\frac{\cos{\theta}}{\sin{\theta}}\bra{\Psi}\ket{\varphi^A \varphi^B}.
\end{aligned}
\end{equation}
As $\ket{\varphi^A \varphi^B},\,\ket{\varphi^A {\varphi}^{B\bot}},\,\ket{{\varphi}^{A\bot} \varphi^B},\,\ket{{\varphi}^{A\bot} {\varphi}^{B\bot}}$ are orthogonal to each other, the following inequality holds
\begin{equation}\label{subcompleteness}
\begin{aligned}
    \left|\bra{\Psi}\ket{{\varphi}^{A\bot} {\varphi}^{B\bot}}\right|^2 &\leq 1- \left|\bra{\Psi}\ket{\varphi^A \varphi^B}\right|^2 - \left|\bra{\Psi}\ket{\varphi^A {\varphi}^{B\bot}}\right|^2 - \left|\bra{\Psi}\ket{{\varphi}^{A\bot} \varphi^B}\right|^2\\
    &= 1-\left(1+\frac{\cos^2{\gamma}}{\sin^2{\gamma}}+\frac{\cos^2{\theta}}{\sin^2{\theta}}\right) \left|\bra{\Psi}\ket{\varphi^A \varphi^B}\right|^2,
\end{aligned}
\end{equation}
and thereafter
\begin{equation}
\begin{aligned}
    \left|\bra{\Psi}\ket{\bar{\varphi}^A \bar{\varphi}^B}\right|^2 =& \cos^2{\theta}\cos^2{\gamma}\left|\bra{\Psi}\ket{\varphi^A \varphi^B}\right|^2 + \sin^2{\theta}\sin^2{\gamma}\left|\bra{\Psi}\ket{{\varphi}^{A\bot} {\varphi}^{B\bot}}\right|^2 - 2\sin{\theta}\cos{\theta}\sin{\gamma}\cos{\gamma}\bra{\Psi}\ket{\varphi^A \varphi^B}\bra{\Psi}\ket{{\varphi}^{A\bot} {\varphi}^{B\bot}}\\
    \leq&\cos^2{\theta}\cos^2{\gamma}\left|\bra{\Psi}\ket{\varphi^A \varphi^B}\right|^2 + \sin^2{\theta}\sin^2{\gamma}\left[1-\left(1+\frac{\cos^2{\gamma}}{\sin^2{\gamma}}+\frac{\cos^2{\theta}}{\sin^2{\theta}}\right) \left|\bra{\Psi}\ket{\varphi^A \varphi^B}\right|^2\right]
    \\&+ 2\sin{\theta}\cos{\theta}\sin{\gamma}\cos{\gamma}\left|\bra{\Psi}\ket{\varphi^A \varphi^B}\right|\sqrt{1-\left(1+\frac{\cos^2{\gamma}}{\sin^2{\gamma}}+\frac{\cos^2{\theta}}{\sin^2{\theta}}\right) \left|\bra{\Psi}\ket{\varphi^A \varphi^B}\right|^2},
\end{aligned}
\end{equation}
thus
\begin{equation}
\begin{aligned}
    &\left|\bra{\Psi}\ket{\varphi^A \varphi^B}|^2 + |\bra{\Psi}\ket{\bar{\varphi}^A \bar{\varphi}^B}\right|^2\\
    \leq&\sin^2{\theta}\sin^2{\gamma} + (1+\cos^2{\theta}\cos^2{\gamma}-\sin^2{\theta}\sin^2{\gamma}-\sin^2{\theta}\cos^2{\gamma}-\cos^2{\theta}\sin^2{\gamma})\left|\bra{\Psi}\ket{\varphi^A \varphi^B}\right|^2\\
    &+2\sin{\theta}\cos{\theta}\sin{\gamma}\cos{\gamma}\left|\bra{\Psi}\ket{\varphi^A \varphi^B}\right|\sqrt{1-\left(1+\frac{\cos^2{\gamma}}{\sin^2{\gamma}}+\frac{\cos^2{\theta}}{\sin^2{\theta}}\right)\left|\bra{\Psi}\ket{\varphi^A \varphi^B}\right|^2}\\
    =&\sin^2{\theta}\sin^2{\gamma}+2\cos^2{\theta}\cos^2{\gamma}\left|\bra{\Psi}\ket{\varphi^A \varphi^B}\right|^2\\
    &+2\sin{\theta}\cos{\theta}\sin{\gamma}\cos{\gamma}\left|\bra{\Psi}\ket{\varphi^A \varphi^B}\right|\sqrt{1-\left(1+\frac{\cos^2{\gamma}}{\sin^2{\gamma}}+\frac{\cos^2{\theta}}{\sin^2{\theta}}\right)\left|\bra{\Psi}\ket{\varphi^A \varphi^B}\right|^2}.
\end{aligned}
\end{equation}
For the right side of this inequality, it can be regarded as a function of the form $f(x) = ax^2 + bx\sqrt{1-cx^2}+d$, with the variant
\begin{equation}
    x = \left|\bra{\Psi}\ket{\varphi^A \varphi^B}\right|\in[0,1],
\end{equation}
and the coefficients
\begin{equation}
\begin{aligned}
    \left\{
        \begin{array}{lr}
          a = 2\cos^2{\theta}\cos^2{\gamma}, &\\
          b = 2\sin{\theta}\cos{\theta}\sin{\gamma}\cos{\gamma}, &\\
          c = 1+\dfrac{\cos^2{\gamma}}{\sin^2{\gamma}}+\dfrac{\cos^2{\theta}}{\sin^2{\theta}}, &\\
          d = \sin^2{\theta}\sin^2{\gamma}.
        \end{array}
          \right.
\end{aligned}
\end{equation}
In our problem, $c\geq1$, hence it is required that $x^2\geq\frac{1}{c}$. The derivative of $f(x)$ is
\begin{equation}
    f'(x) = 2ax+b\sqrt{1-cx^2}-\frac{bcx^2}{1-cx^2}.
\end{equation}
Let $f'(x)=0$ and we have
\begin{equation}
    x^2 = \frac{1}{2c}\pm\frac{a}{2c}\frac{1}{\sqrt{a^2+b^2c}},
\end{equation}
where the term $\frac{a}{2c}\frac{1}{\sqrt{a^2+b^2c}}\leq\frac{a}{2c}\frac{1}{a}=\frac{1}{2c}$, so we are able to take the value $x=\pm\sqrt{\frac{1}{2c}\pm\frac{a}{2c}\frac{1}{\sqrt{a^2+b^2c}}}$. By some calculation with respect to $f'(x)$, for the maximum value of $f(x)$, we only need to consider its value at the points $x=\sqrt{\frac{1}{c}}$ and $x=\sqrt{\frac{1}{2c}-\frac{a}{2c}\frac{1}{\sqrt{a^2+b^2c}}}$.\\

\emph{(1) $x=\sqrt{\frac{1}{c}}$}\\
The function value at this point
\begin{equation}
\begin{aligned}
    f(x) &= \frac{a}{c}+d\\
    &=\frac{2\cos^2{\theta}\cos^2{\gamma}}{1+\frac{\cos^2{\gamma}}{\sin^2{\gamma}}+\frac{\cos^2{\theta}}{\sin^2{\theta}}}+\sin^2{\theta}\sin^2{\gamma}\\
    &=\frac{1+\cos^2{\theta}\cos^2{\gamma}}{1+\frac{\cos^2{\gamma}}{\sin^2{\gamma}}+\frac{\cos^2{\theta}}{\sin^2{\theta}}}\\
    &\leq 1,
\end{aligned}
\end{equation}
and the equality is taken iff $\cos{\theta}=\cos{\gamma}=0$. This indicates that $\bra{\varphi^A}\ket{\bar{\varphi}^A} = \bra{\varphi^B}\ket{\bar{\varphi}^B}=0$.

\emph{(2) $x=\sqrt{\frac{1}{2c}-\frac{a}{2c}\frac{1}{\sqrt{a^2+b^2c}}}$}\\
The function value at this point
\begin{equation}
\begin{aligned}
    f(x) &= ax^2 + bx\sqrt{1-(\frac{1}{2}-\frac{a}{2}\frac{1}{\sqrt{a^2+b^2c}})}+d\\
    &= \frac{a}{2c} + \frac{1}{2c}\frac{b^2c-a^2}{\sqrt{b^2c+a^2}}+d\\
    &= \frac{1+\cos{\theta}\cos{\gamma}-2\cos^3{\theta}\cos^3{\gamma}}{1+\frac{\cos^2{\gamma}}{\sin^2{\gamma}}+\frac{\cos^2{\theta}}{\sin^2{\theta}}}\\
    &\leq 1,
\end{aligned}
\end{equation}
and the equality is taken iff $\cos{\theta}=\cos{\gamma}=0$.

Above all, we come to the conclusion that when $\bra{\Psi}\ket{\varphi^A \bar{\varphi}^B} = \bra{\Psi}\ket{\bar{\varphi}^A \varphi^B}=0$, the inequality $\left|\bra{\Psi}\ket{\varphi^A \varphi^B}\right|^2 + \left|\bra{\Psi}\ket{\bar{\varphi}^A \bar{\varphi}^B}\right|^2 \leq 1$ holds, and equality is taken iff $\bra{\varphi^A}\ket{\bar{\varphi}^A} = \bra{\varphi^B}\ket{\bar{\varphi}^B}=0$, and the subspaces of Alice and Bob are spanned by $\left\{\ket{\varphi^A},\,\ket{\bar{\varphi}^A}\right\}, \,\left\{\ket{\varphi^B},\,\ket{\bar{\varphi}^B}\right\}$, respectively (notice the inequality scaling~\eqref{subcompleteness}).
\end{proof}

%%%%%%%%%%%%%%%%%%%%%%%%%%%%%%%%%%%%%%%%
% choose a style
%\bibliographystyle{ieeetr}
%\bibliographystyle{unsrt}
%\bibliographystyle{apsrev4-1}
%%%%%%%%%%%%%%%%%%%%%%%%%%%%%%%%%%%%%%%%

%%%%%%%%%%%%%%%%%%%%%%%%%%%%%%%%%%%%%%%%
% choose a .bib file

%\nocite{*}
%\bibliography{apssamp}% Produces the bibliography via BibTeX.

\end{document}